\documentclass[aps,prl,reprint,letterpaper,showpacs]{revtex4-1}

\usepackage{dsfont}
\usepackage{amsmath}
\usepackage{amssymb}
\usepackage{amsthm}
\usepackage{graphicx}

\newcommand{\abs}[1]{\ensuremath{|#1|}}

\newcommand{\cancel}[1]{}
\newcommand{\norm}[2]{\ensuremath{|\!|#1|\!|_{#2}}}

\newcommand{\tr}{\textnormal{tr}}

\newcommand{\ptr}[1]{\textnormal{tr}_{\textnormal{\tiny #1}}}
\newcommand{\ptrace}[2]{\ensuremath{\ptr{#1} (#2)}}


\newcommand{\idx}[2]{{#1}_{#2}}

\newcommand{\ket}[1]{| #1 \rangle}
\newcommand{\keti}[2]{| #1 \rangle_{\textnormal{\tiny #2}}}
\newcommand{\bra}[1]{\langle #1 |}

\newcommand{\proj}[2]{| #1 \rangle\!\langle #2 |}
\newcommand{\proji}[3]{| #1 \rangle\!\langle #2 |_{\textnormal{\tiny #3}}}

\newcommand{\vecstate}[1]{\ket{#1}\bra{#1}}




\newcommand{\kron}{\otimes}
\newcommand{\eps}{\varepsilon}

\newcommand{\h}{\ensuremath{\mathcal{H}}}
\newcommand{\hi}[1]{\ensuremath{\mathcal{H}_{\textnormal{\tiny #1}}}}
\newcommand{\hA}{\hi{A}}
\newcommand{\hB}{\hi{B}}

\newcommand{\hAB}{\hi{AB}}

\newcommand{\id}{\ensuremath{\mathds{1}}}
\newcommand{\idi}[1]{\ensuremath{\mathds{1}_{\textnormal{\tiny #1}}}}

\newcommand{\idA}{\idi{A}}
\newcommand{\idB}{\idi{B}}

\newcommand{\idZ}{\idi{Z}}
\newcommand{\idAB}{\idi{AB}}

\newcommand{\idBC}{\idi{BC}}




\newcommand{\posops}[1]{\ensuremath{\mathcal{P}(#1)}}

\newcommand{\normstates}[1]{\ensuremath{\mathcal{S}_{=}(#1)}}
\newcommand{\subnormstates}[1]{\ensuremath{\mathcal{S}_{\leq}(#1)}}


\newcommand{\rhot}{\ensuremath{\tilde{\rho}}}
\newcommand{\rhob}{\ensuremath{\bar{\rho}}}

\newcommand{\rhoA}{\ensuremath{\idx{\rho}{A}}}
\newcommand{\rhoB}{\ensuremath{\idx{\rho}{B}}}

\newcommand{\rhoAB}{\ensuremath{\idx{\rho}{AB}}}
\newcommand{\rhoABZ}{\ensuremath{\idx{\rho}{ABZ}}}

\newcommand{\rhotAB}{\ensuremath{\idx{\rhot}{AB}}}

\newcommand{\rhoABC}{\ensuremath{\idx{\rho}{ABC}}}

\newcommand{\rhotABC}{\ensuremath{\idx{\rhot}{ABC}}}

\newcommand{\sigmat}{\ensuremath{\tilde{\sigma}}}

\newcommand{\sigmaB}{\ensuremath{\idx{\sigma}{B}}}

\newcommand{\tauA}{\ensuremath{\idx{\tau}{A}}}

\newcommand{\phiAB}{\ensuremath{\idx{\phi}{AB}}}

\newcommand{\psiAB}{\ensuremath{\idx{\psi}{AB}}}

\newcommand{\epst}{\ensuremath{\tilde{\eps}}}

\newcommand{\mX}{\mathcal X}

\newcommand{\mF}{\mathcal F}

\newcommand{\mZ}{\mathcal Z}

\newcommand{\mE}{\mathcal E}


\newcommand{\cI}{\mathcal{I}}

\newcommand{\cM}{\mathcal{M}}


\newcommand{\chh}[5]{\ensuremath{H_{#1}^{#2}({#3}|{#4})_{#5}}}


\newcommand{\chmin}[3]{\chh{\textnormal{min}}{}{#1}{#2}{#3}}
\newcommand{\chmineps}[4]{\chh{\textnormal{min}}{#1}{#2}{#3}{#4}}






\newcommand{\epsball}[2]{\ensuremath{\mathcal{B}^{#1}(#2)}}

\theoremstyle{plain}
\newtheorem{lemma}{Lemma}
\newtheorem{theorem}[lemma]{Theorem}

\theoremstyle{definition}
\newtheorem{definition}[lemma]{Definition}

\newtheorem*{definition2}{Definition}

\newcommand{\duni}[3]{\Delta(#1 | #2)_{#3}}
\newcommand{\dis}{D}

\begin{document}

\title{Impossibility of Growing Quantum Bit Commitments}

\author{Severin \surname{Winkler}} 
\affiliation{Computer Science Department, ETH Zurich, 8092 Zurich, Switzerland}
\author{Marco \surname{Tomamichel}}
\author{Stefan \surname{Hengl}}
\author{Renato \surname{Renner}}
\affiliation{Institute for Theoretical Physics, ETH Zurich, 8093 Zurich, Switzerland}

\begin{abstract}
  Quantum key distribution (QKD) is often, more correctly, called key
  growing. Given a short key as a seed, QKD enables two parties,
  connected by an insecure quantum channel, to generate a secret key
  of arbitrary length. Conversely, no key agreement is possible
  without access to an initial key. Here, we consider another
  fundamental cryptographic task, commitments. While, similar to key
  agreement, commitments cannot be realized from scratch, we ask
  whether they may be grown. That is, given the ability to commit to a
  fixed number of bits, is there a way to augment this to commitments
  to strings of arbitrary length? Using recently developed
  information-theoretic techniques, we answer this question in the
  negative.
\end{abstract}

\maketitle

\emph{Introduction.}|\, Quantum key distribution~\cite{BB84,E91}
allows two honest parties, Alice and Bob, to establish a shared secret
key, using only insecure quantum communication. However, a necessary
precondition for this to be possible is that they have access to a
pre-shared initial key, to be used for authentication|a fact that is
sometimes overlooked in the literature. It is easy to see that without
such an initial key, it is impossible for Alice to distinguish between
Bob and an eavesdropper pretending to be Bob\,|\,rendering all further
security considerations futile. Nevertheless, once an initial key is
available, this key can be \emph{grown}, i.e., expanded to arbitrary
length~\footnote{An explicit calculation that shows that a
  constant-length initial key is sufficient to generate arbitrarily
  many novel key bits is given, for example, in~\cite{muller09}.}.

Another similar example is coin tossing. It is known that there is no
unconditionally secure two-party protocol that generates a fair random
coin which cannot be biased by a dishonest
party~\cite{Blum82}. However, if the two parties have access to a
certain number of ideal coin tosses to start with, they can use
protocols to obtain a larger number of secure coin tosses. (Here,
security holds in a standalone model, where it is assumed that the
protocol is invoked only once~\cite{HMU06}.)

Following this line of thought, one may wonder whether other
cryptographic primitives, such as commitments~\cite{Blum82},
can be grown in a similar way. A \emph{string commitment} protocol
allows a sender to commit to a bit string that is revealed to a
receiver at a later point. The protocol is secure for the sender
(\emph{hiding}) if the receiver cannot gain information about the
commitment before she reveals it and it is secure for the receiver
(\emph{binding}) if the sender cannot change the string once
committed. Here, we are only interested in unconditionally secure
protocols, i.e., protocols that are secure against dishonest parties
with unlimited computing power.

While it is known that unconditionally secure commitments cannot be
implemented using classical or quantum communication
only~\cite{Mayers97,LoChau97} (see
also~\cite{DKSW06,CDPSW09}), this Letter strives to answer the
question whether it is possible to implement a long string commitment
with a protocol that uses a smaller number of bit commitments that are
provided as a resource. (A \emph{bit commitment} is a string commitment of
length one.)  We will answer this question to the negative, showing
that it is impossible to expand commitments even minimally, and even
under relaxed security criteria.

Commitments have a wide variety of applications in theoretical cryptography,
ranging from zero-knowledge proofs~\cite{GMR85} to secure coin tossing. In particular, commitments can be used to
implement statistically secure and universally composable oblivious
transfer~\cite{BBCS92,DFLSS09,Unruh10}, a functionality that is
sufficient to realize universal secure two-party
computation~\cite{Kilian88}.

In~\cite{WW10} it has been shown that unconditionally secure oblivious
transfer cannot be extended using quantum protocols. We note that this
already imposes certain bounds on the resources that can be obtained
from a limited number of bit commitments~\footnote{Using the
  equivalence of oblivious transfer and commitments, the result
  of~\cite{WW10} implies that there exists no \emph{composable}
  protocol that implements $(m+1)$ \emph{individual} bit commitments
  using $m$ bit commitments as a resource, if one demands that the
  error decreases exponentially in $m$.}.  Furthermore, bounds on the
quality of commitments for relaxed security definitions have been
shown in~\cite{SR01,BCHLW06,CK11}. Conversely, it has been shown that 
secure commitments can be implemented in relativistic settings involving 
multiple sites \cite{Kent11} or using trusted resources such as a noisy 
channel~\cite{Crepea97} or (trusted) distributed randomness~\cite{IMNW04,WolWul04}.

We now proceed with a more detailed specification of string commitment as well 
as the class of protocols we consider. We then briefly review the smooth
entropy calculus, which is required for our technical arguments. Our
main result that commitments cannot be grown is stated as
Theorem~\ref{thm:imp-c}. This is supplemented with an alternative
version of the claim, which applies if the initial functionality
enables committing to quantum bits.

\emph{String Commitments.}|\,A (classical) string commitment of length
$\ell$ is a functionality that takes a bit string $x\in \{0,1\}^\ell$
from the sender and outputs the message \texttt{committed} to the
receiver. Later, on input \texttt{open} from the sender, the
functionality sends $x$ to the receiver.

In the following, we consider implementations of this task by quantum
protocols between two parties, Alice (who holds system~$A$) and
Bob~($B$). They have access to a noiseless quantum and a noiseless
classical channel, as well as to an additional resource, $C$ (to be
specified later). In any round of the protocol, the parties may perform an
arbitrary quantum operation on the system in their possession
conditioned on the available classical information~\footnote{This assumption 
is not justified in the relativistic setting considered in~\cite{Kent11}.} \,|\, this includes
generating the input for the available communication interfaces.  The
use of the quantum channel then corresponds to a party transferring a
part of her system to the other party. The classical channel measures
the input in a canonical basis and sends the outcome to the
receiver. We assume that the total number of rounds of the protocol is
bounded by some finite number. By padding the protocol with empty
rounds, this corresponds to the assumption that the number of rounds
is equal in every execution. 

A string commitment scheme over strings of length $\ell$ generally
consists of two phases. In the first, the \emph{commit phase}, the
sender commits to an $\ell$-bit string~$x$. Later, in the
\emph{opening phase} the sender reveals $x$ to the receiver. The total system (consisting of the
subsystems controlled by Alice and Bob) is assumed to be in a pure
state initially. By introducing an additional space the quantum
operations of both parties can be purified, i.e., we can assume that
the parties apply, conditioned on the information shared over the
classical channel, isometries to their systems. Thus, we will assume
in the following that the state at the end of the commit phase
conditioned on all the classical communication is pure. 

\emph{Security Definitions.}|\,Our main technical contribution will be a quantitative statement on
the impossibility of growing string commitments. To formulate this
statement, we introduce two definitions that capture the cheating
probability of Alice and the information gain of Bob, respectively. We
emphasize that the properties required in these definitions are only
necessary (we therefore call the definitions ``weak''), but would not
be sufficient for the security of a protocol~\footnote{In particular,
  one would have to consider arbitrary malicious strategies of
  dishonest parties to prove the security of a protocol.}. Since we
are interested in the impossibility of certain protocols, this only
strengthens our results.

Using a commitment protocol, a (quantum) Alice can always commit to a
superposition of strings \cite{Mayers97, DMS00} as follows: she
prepares a state $\frac{1}{\sqrt{|\mX|}}\sum_{x\in
  \mX}\ket{x}_X\kron\ket{x}_{X'}$, where $\mX$ is a subset of the
$\ell$-bit strings. Then she honestly executes the commit protocol
with the first half of this state as input and keeps the system
$X'$. We denote the resulting joint state of Alice, Bob and the
resource system by $\rho^{\mX}_{A'BC}$, where $A'$ stands for
$XX'A$. Later, Alice can measure $X'$ and execute the opening phase of
the protocol with the resulting string $x$. Thus, even for a perfectly
binding commitment scheme, we cannot require that there is a fixed
value $x$ Alice is committed to after the commit phase. Rather, we can
only demand that $\sum_{x \in \{0,1\}^n}p_{x}\leq 1$ where $p_x$ is
the probability that Alice successfully reveals some $x$ in the
opening phase.

In order to quantify the degree of bindingness of a protocol, we
consider the following attack by Alice. First, she commits to a
superposition of strings from a set $\mX_0\subseteq \{0,1\}^\ell$ as
before.  Then, she tries to map (by a local transformation $\mE_A$ on
her system) the resulting state $\rho^{\mX_0}_{A'BC}$ to
$\rho^{\mX_1}_{A'BC}$, corresponding to the commitment to a set $\mX_1
\subseteq \{0,1\}^\ell$ which is disjoint from $\mX_0$.  Such an
attack is successful with probability at least $\Delta$ if the
protocol cannot detect the transformation with probability more than
$1-\Delta$. Using the \emph{trace distance},
$D(\rho,\tau):=\frac{1}{2} \norm{\rho - \tau}{1}$, this can be turned
into a necessary condition for security, formulated in terms of the
closeness of the transformed state, $(\mE_{A'} \kron
\idBC)(\rho^{\mX_0}_{A'BC})$, to the target state
$\rho^{\mX_1}_{A'BC}$.
 
\begin{definition2}[Weakly $\Delta$-binding]\label{def:binding}
 We call a commitment scheme \emph{weakly $\Delta$-binding} if 
\begin{align*}
 \min_{\mX_0,\mX_1} \min_{\mE_{A'}}\dis\left((\mE_{A'} \kron \idBC)(\rho^{\mX_0}_{A'BC}),\rho^{\mX_1}_{A'BC}\right) \geq 1 - \Delta\;,
\end{align*}
where $\mX_0$ and $\mX_1$ are disjoint sets of strings from $\{0,1\}^\ell$ and $\mE_{A'}$ is a completely positive trace preserving map 
acting on Alice's system.
\end{definition2}

To define the hiding property, we consider the joint state
$\rhoAB^{x}$ of Alice's and Bob's systems that results from an
execution of the protocol where both parties are honest and Alice
commits to $x$.  For a commitment scheme to be $\eps$-hiding, we
require that $\dis(\rhoB^{x},\rhoB^{x'})\leq \eps$ for any
$x,x'$. This immediately implies the following (necessary) security
condition.

\begin{definition2}[Weakly $\eps$-hiding]
A bit commitment protocol is \emph{weakly $\eps$-hiding} for uniform $X$ if the marginal state $\rho_{XB}$ after the commit phase is $\eps$-close to a 
state where $X$ is uniform with respect to $B$, i.e.,
\begin{align}\label{eq:hiding}
\min_{\sigmaB}\dis(\rho_{XB},\frac{1}{|X|}\id_{X}\kron \sigmaB)\leq \eps\;.
\end{align}
\end{definition2}

\emph{Smooth Entropies.}|\,Our proof is based on the insight that
every conceivable protocol that aims to extend bit commitment allows
for an attack, which can be established using known results on privacy
amplification and the smooth entropy formalism. (Privacy amplification
has also been used in \cite{BCHLW06} to construct attacks on
commitment schemes.) The detailed proofs of the technical statements
can be found in~\footnote{See EPAPS Document No.[number will be
  inserted by publisher].}.

Let $\rho_{XB}=\sum_{x}P(x)\ket{x}\bra{x}\kron \rhoB^x$ be a classical-quantum (CQ)
state. Then the min-entropy of $X$ conditioned on $B$, denoted $\chmin{X}{B}{\rho}$, corresponds to the
negative logarithm of the probability of guessing $X$ correctly from a
quantum memory $B$~\cite{KRS09}. The smooth min-entropy of a state is defined as $\chmineps{\eps}{X}{B}{\rho} := \max_{\rhot}\chmin{X}{B}{\rhot}$, where the optimization is over all (sub-normalized) states $\eps$-close to $\rho_{XB}$ in terms of the purified distance, which corresponds to the minimum trace distance between their purifications. The purified distance between two states, $\rho$ and $\rhot$, is upper bounded by $\sqrt{2\dis(\rho,\rhot)}$ \cite{TCR10}. 

The \emph{leftover hash lemma} against quantum side information~\cite{Renner2005} (see also~\cite{Tom2010}) asserts that the smooth min-entropy of $\chmineps{\eps}{X}{B}{\rho}$ characterizes the amount of uniform randomness that can be extracted from $X$ with respect to the quantum side information $B$. A consequence of this is the following fact:
for any CQ state $\rho_{XB}=\frac{1}{2^\ell}\sum_{x \in \{0,1\}^\ell}\ket{x}\bra{x}\kron \rhoB^x$ there exists a function $f: \{0,1\}^\ell\rightarrow \{0,1\}$ such that
\begin{align}\label{eq:leftover}
  \dis(\rho_B^{f,\mX_0},\rho_B^{f,\mX_1})\leq 2 \epsilon  + \sqrt{2^{1-\chmineps{\eps}{X}{B}{\rho}}}\;,
\end{align}
where $\rho_B^{f,\mX_z}=\frac{1}{|f^{-1}(z)|}\sum_{x \in f^{-1}(z)}\rho_B^{x}$.

In order to derive bounds on the conditional min-entropy when the
conditioning system is manipulated, we use the following
data-processing inequalities. Let $\rho_{XBC}$ be a CQ state, where
$C$ is an additional quantum register with dimension $\abs{C}$. Then,
the min-entropy $\chmineps{\eps}{X}{BC}{\rho}$ cannot increase by more
than $\log \abs{C}$ when a projective measurement $C \!\to\! Z$ is
applied,
\begin{align}\label{ineq:measurement}
  	\chmineps{\eps}{X}{BC}{\rho} \geq \chmineps{\eps}{X}{BZ}{\rho} - \log \abs{C} \;.
\end{align}
Moreover, if the classical register $Z$ is discarded, we have
\begin{align}\label{ineq:classical-chain}
  \chmineps{\eps}{X}{BZ}{\rho}\geq \chmineps{\eps}{X}{B}{\rho} - \log \abs{Z}\;.
\end{align}

The following fact, also used in the proofs of \cite{Mayers97,LoChau97,Lo97}, is an essential building block of our impossibility proofs: let $\phi_{AB}^0$ and $\phi_{AB}^1$ be two pure states corresponding to the joint state of Alice and Bob when committing to '0' and '1', respectively. If the marginal state of $\phi_{AB}^0$ and $\phi_{AB}^1$ on Bob's system is (almost) the same, then there exists a unitary $U_A$ on Alice system that (approximately) transforms $\phi_{AB}^0$ into $\phi_{AB}^1$, i.e., $(U_A\kron\idB)\ket{\phi_{AB}^0}\approx \ket{\phi_{AB}^1}$. This reasoning can be generalized to joint states $\rho_{YAYB}^b$ that are pure conditioned on all the classical information $Y$ available to both Alice and Bob as follows. If
$\dis(\rho_{YB}^0,\rho_{YB}^1)\leq\eps$, then there exists a unitary $U_{YA}$ such that 
\begin{align}\label{eq:classical-attack}
\dis\big(U_{YA}\,\rho_{YAYB}^0\, U_{YA}^\dagger,\rho_{YAYB}^1\big)\leq \sqrt{2 \eps}\;,
\end{align}
where we omitted the identity operator on $YB$.

\emph{Main Result}|\,One can trivially implement a string commitment
of length $n$ from $n$ bit commitments. Furthermore, it is easy to see
that, using a resource which allows the parties to commit to $n$
qubits, one can implement $n$ individual commitments to two bits each
using superdense coding \cite{BW92}, and, therefore, also a string
commitment of length $2n$. Our main result essentially states that
these two trivial implementations are essentially optimal.

More precisely, we first consider implementations of string
commitments based on a functionality that enables $n$ perfect
(classical) bit commitments. We show that the length of the
implemented string commitment is approximately upper bounded by $n$ if
this is required to be highly binding and hiding.

\begin{theorem}\label{thm:imp-c}
  Every quantum protocol which uses $n_A$ bit commitments from Alice
  to Bob and $n_B$ bit commitments from Bob to Alice with $n=n_A+n_B$
  as a resource and implements an $\eps$-hiding and $\Delta$-binding
  string commitment of length $\ell$ must satisfy
\begin{align*}
 \ell \leq n-2\log\left(\frac{(1-\Delta)^2}{4}-\sqrt{2\eps}\right)-1\;.
\end{align*}
In particular, if $\Delta=\eps\leq 0.01$, then $\ell < n+6$.
\end{theorem}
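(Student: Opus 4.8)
The plan is to construct an explicit attack, either by Alice or by Bob, that is guaranteed to work against any protocol implementing a string commitment of length $\ell$ from $n$ bit commitments. The key object is the joint pure state $\rho^{\mX}_{A'BC}$ at the end of the commit phase when Alice commits to a uniform superposition over a set $\mX$ of $\ell$-bit strings. The idea is a dichotomy: either Bob's reduced state $\rho_B^{x}$ already leaks enough about $x$ (violating hiding), or it does not — in which case Alice can transform commitments to one set of strings into commitments to a disjoint set (violating binding). The resource $C$ consisting of $n$ classical bit commitments is the pivot: at the end of the commit phase, the committed bits inside $C$ are classical registers of total size $n$, and by the data-processing inequalities \eqref{ineq:measurement} and \eqref{ineq:classical-chain} these contribute at most $n$ bits to any min-entropy bound.

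First I would take $\mX = \{0,1\}^\ell$ and write the post-commit CQ state $\rho_{XBC} = 2^{-\ell}\sum_x \proj{x}{x}_X \kron \rho_{BC}^x$, where $X$ holds the committed string (equivalently, the register $X'$ that Alice kept). The crucial estimate is a lower bound on $\chmineps{\eps}{X}{BC}{\rho}$. If the scheme is $\eps$-hiding, then by Definition~2 (weakly $\eps$-hiding) the state $\rho_{XB}$ is $\eps$-close to $\frac{1}{2^\ell}\idX \kron \sigmaB$, so $\chmineps{\eps'}{X}{B}{\rho} \approx \ell$ for a suitable smoothing parameter $\eps'$ related to $\eps$ via the purified-distance bound $\sqrt{2\eps}$. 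Then, peeling off the resource $C$ — first measuring the classical commitment registers, then discarding the resulting classical string of length $n$, via \eqref{ineq:measurement} and \eqref{ineq:classical-chain} — costs at most $n$ bits, giving roughly $\chmineps{\eps'}{X}{B}{\rho} \gtrsim \ell - n$. Next I would apply the leftover hash consequence \eqref{eq:leftover}: there is a function $f:\{0,1\}^\ell\to\{0,1\}$ with $\dis(\rho_B^{f,\mX_0},\rho_B^{f,\mX_1}) \le 2\eps' + \sqrt{2^{1-(\ell-n)}} = 2\eps' + \sqrt{2^{1+n-\ell}}$, where $\mX_z = f^{-1}(z)$ are disjoint. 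So Bob's marginals for the two disjoint sets are close.

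Once Bob's marginals on the two disjoint sets $\mX_0, \mX_1$ are close in trace distance, I invoke the Uhlmann-type fact around \eqref{eq:classical-attack}: since the full states $\rho^{\mX_0}_{A'BC}$ and $\rho^{\mX_1}_{A'BC}$ are pure conditioned on the classical information in $C$ (the protocol is purified by assumption, and $C$'s committed bits are the only classical registers), closeness of Bob's reduced states forces the existence of a local map $\mE_{A'}$ on Alice's side with $\dis\big((\mE_{A'}\kron\idBC)(\rho^{\mX_0}_{A'BC}), \rho^{\mX_1}_{A'BC}\big) \le \sqrt{2\,(2\eps' + \sqrt{2^{1+n-\ell}})}$. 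But Definition~2 (weakly $\Delta$-binding) says this trace distance is at least $1-\Delta$. Combining the two bounds, $1-\Delta \le \sqrt{2(2\eps' + \sqrt{2^{1+n-\ell}})}$, and solving for $\ell$ gives, after using $\eps' \approx \sqrt{2\eps}$ and rearranging, $\ell \le n - 2\log\!\big(\frac{(1-\Delta)^2}{4} - \sqrt{2\eps}\big) - 1$, which is exactly the claimed inequality; plugging $\Delta=\eps\le 0.01$ yields $\ell < n+6$ by direct numerical evaluation.

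The main obstacle I anticipate is bookkeeping the smoothing parameters and the conditioning on classical information correctly: one must be careful that the $n$ classical bit commitments can genuinely be treated as a classical register of size $n$ at the point in the protocol where the attack is mounted (this uses that bit commitments, while a resource, deliver classical values, and that the commit phase has ended so these are "frozen"), and that the purification/Uhlmann step applies conditioned on exactly that classical data — which is why the definitions are phrased for states that are pure conditioned on all shared classical information. A secondary subtlety is chaining the trace-distance-to-purified-distance conversions \cite{TCR10} without losing constants, and ensuring the function $f$ from \eqref{eq:leftover} indeed produces two nonempty disjoint sets (true as soon as $f$ is balanced enough, which the leftover hash construction guarantees up to negligible corrections). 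Handling the split $n = n_A + n_B$ uniformly — the argument does not care which direction the bit commitments go, only their total count enters the min-entropy loss — is then immediate.
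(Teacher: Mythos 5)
Your argument follows the same skeleton as the paper's proof: hiding gives $\chmineps{\sqrt{2\eps}}{X}{B}{\rho}\geq\ell$, the chain rules \eqref{ineq:measurement} and \eqref{ineq:classical-chain} cost at most $n$ when the resource registers are adjoined, the leftover hash lemma \eqref{eq:leftover} yields a balanced $f$ whose two preimage sets have close marginals, and the Uhlmann step \eqref{eq:classical-attack} converts this into a local cheating map for Alice that contradicts weak $\Delta$-bindingness; your final arithmetic also matches. There is, however, one genuine gap. You treat the resource $C$ as if it were simply a classical $n$-bit register sitting next to Bob's system, and you assert that the post-commit state is ``pure conditioned on the classical information in $C$.'' Neither claim holds for the original protocol, in which $C$ is a trusted functionality controlled by neither party: the bits Alice committed through it are hidden from Bob (so they are not shared classical information), and Bob's own inputs to the functionality are measured by a third party (so the global state need not be pure conditioned on the classical communication, and the purifying system of $\rho_{BC}^{\mX_z}$ need not be Alice's system $A'$). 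The paper closes exactly this gap by passing to a modified protocol in which Bob simulates the resource: Alice measures her committed bits, keeps a copy, and sends them to Bob, who stores them in a classical register $C_A$ of size $n_A$, while Bob replaces his own commitment inputs by the purifying isometry $\ket{y}\mapsto\ket{yy}$ and keeps the extra register $C_B$ of size $n_B$. This makes Bob strictly more powerful (so an attack on the modified protocol implies one on the original), leaves $\rho_{XAB}$ unchanged, restores purity conditioned on the classical communication with purifying system exactly $A'$, and is what licenses applying the chain rules with total cost $n$. Without some version of this simulation argument, your step from closeness of the $BC$-marginals to the existence of a local map $\mE_{A'}$ on Alice's side alone is not justified.

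A secondary, likely notational, slip: you state the leftover-hash conclusion for $\dis(\rho_B^{f,\mX_0},\rho_B^{f,\mX_1})$ while plugging in the entropy bound $\ell-n$ that pertains to conditioning on $BC$. The Uhlmann step requires closeness of the marginals on all of $BC$ (everything outside $A'$), not just on $B$; this is precisely why the lemma must be invoked with conditioning system $BC$ and why the $-n$ loss enters at all. As written, the conditioning systems in these two steps do not match.
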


\begin{proof}

\cancel{
Let  $\rho_{XABC}=\sum_x \frac{1}{2^\ell}\vecstate{x}\kron\rhoABC^x$ be the state at the end of the commit phase of an $\eps$-hiding commitment protocol when the input $X$ of Alice is uniformly distributed. Since $\rho_{XB}$ must be $\eps$-close to uniform and the purified distance between $\rho$ and $\rho'$ is upper bounded by $\sqrt{2\dis(\rho,\rho')}$ \cite{TCR10}, the definition of the smooth min-entropy implies, with $\epst:=\sqrt{2\eps}$, that 
\begin{align}\label{ineq:hiding}
\chmineps{\epst}{X}{B}{\rho}\geq \log |X|=\ell.
\end{align}
}

In the following, we construct an attack by Alice on a modified
protocol that does not use the resource bit commitments and is not necessarily hiding. 
In this protocol we make Bob more powerful in the sense that he can
simulate the original protocol locally. Thus, any successful attack of
Alice against the modified protocol implies a successful attack
against the original protocol. 

In the modified protocol, Alice, instead of using the resource bit
commitments, measures the bits to be committed, stores a copy and
sends them to Bob, who stores them in a classical register,
$C_A$. When one of these commitments is opened, he moves the
corresponding bit to his register $B$. Bob simulates the action of his
commitments locally as follows: instead of measuring a register, $Y$,
and sending the outcome to the commitment functionality, he applies
the isometry $U:\ket{y}_Y\mapsto \ket{yy}_{YY'}$ purifying the
measurement of the committed bit and stores $Y'$ in another register,
$C_B$. When Bob has to open the commitment, he measures $Y'$ and sends
the outcome to Alice over the classical channel. Furthermore, the state conditioned on the
classical communication is again pure.

Let $\rho_{XABC}=\frac{1}{2^\ell}\sum_x \vecstate{x}\kron \rhoABC^x$, where $C$ stands for $C_AC_B$, be the
state resulting from the execution of the modified protocol when the
input $X$ of Alice is uniformly
distributed. Its marginal state, $\rho_{XAB}$, is the corresponding state at the end of the commit phase of the original commitment protocol. The state $\rho_{XB}$ must be weakly $\eps$-hiding. 
Thus, by the definition of the smooth min-entropy and setting $\epst:=\sqrt{2\eps}$, we get 
\begin{align}\label{ineq:hiding}
\chmineps{\epst}{X}{B}{\rho}\geq \log |X|=\ell.
\end{align}
Therefore, inequalities~\eqref{ineq:measurement} and~\eqref{ineq:classical-chain}  imply that
\begin{align}\label{ineq:ent-lower-bound}
\chmineps{\epst}{X}{BC_AC_B}{\rho}\geq \chmineps{\epst}{X}{B}{\rho}-n\geq\ell-n\;.
\end{align}
From \eqref{eq:leftover} we know that there exists a function~$f$ such that $\dis(\rho^{\mX_0}_{BC},\rho^{\mX_1}_{BC})\leq 2\delta $, where $\delta:=\epst  + \frac{1}{2} \sqrt{2^{1-\chmineps{\epst}{X}{BC}{\rho} }}$ and $\rho_{BC}^{\mX_z}=\frac{1}{|f^{-1}(z)|}\sum_{x\in f^{-1}(z)}\rho_{BC}^{x}$. In order to construct a concrete attack, let Alice choose a bit $z$ and commit to a uniform superposition of all strings $x$ with $f(x)=z$. Then the resulting joint state $\rho_{A'BC}^{\mX_z}$ at the end of the commit phase is pure conditioned an all the shared classical information. According to \eqref{eq:classical-attack} there exists, therefore, a unitary $U_{A'}$ on Alice's system that transforms $\rho_{A'BC}^{\mX_z}$ into a state which is $2\sqrt{\delta}$-close to  $\rho_{A'BC}^{\mX_{1-z}}$ in terms of the trace distance. The definition of weakly $\Delta$-binding implies that $1-\Delta \leq 2\sqrt{\delta}$ and, together with~\eqref{ineq:ent-lower-bound}, the statement follows.
\end{proof}

Next, we consider protocols which use a quantum commitment
functionality that allows the parties to commit to (and later reveal)
$n$ qubit states. By slightly modifying the proof of the theorem, we
show that there cannot exist a protocol that uses such a resource and
implements a string commitment of length larger than $2n$. We consider again a modified protocol, where
Bob simulates the resource system as follows: Alice, instead of using
the resource, sends the committed qubits to Bob, and Bob keeps all the
qubits that he would send to the commitment functionality in the
original protocol in a register, $C$. Let
$\rho_{XABC}$ be the joint state after the execution of the commit phase 
when Alice's input $X$ is uniformly distributed. We have
$\chmineps{\epst}{X}{B}{\rho}\geq \log |X|=\ell$ as in
\eqref{ineq:hiding}. Inequalities \eqref{ineq:measurement} and
\eqref{ineq:classical-chain} together imply that conditioning on an
additional quantum system $C$ cannot decrease the smooth min-entropy
by more than $2 \log \abs{C}$. Thus, we have
\begin{align}\label{ineq:ent-lower-bound2}
\chmineps{\epst}{X}{BC}{\rho}\geq \chmineps{\epst}{X}{B}{\rho}-2\log |C|=\ell-2n\;.
\end{align}  
Now we proceed as in the proof of the main theorem to get
\begin{align}\label{ineq:quantum-bound}
 \ell \leq 2n-2\log\left(\frac{(1-\Delta)^2}{4}-\sqrt{2\eps}\right)-1\;.
\end{align}
Note that the same reasoning applies to any resource which can be simulated by Bob such that the resulting state at the end of the commit phase is pure conditioned on all the classical communication and the simulated resource uses an additional memory of size at most $\log \abs{C}$. Thus, inequality~\eqref{ineq:quantum-bound} holds for arbitrary such resources with $\log |C|\leq n$.

\emph{Conclusions}|\,We proved that it is impossible to use a small
number of bit commitments as a resource to implement a larger string
commitment that is both arbitrarily binding and hiding.  This is in
stark contrast to corresponding positive results for other
cryptographic primitives, such as \textit{quantum key distribution} or
\emph{coin flipping}, where the resource of interest, once available
in finite number, can be enlarged ad infinitum.

The techniques we use to show our impossibility results can be applied to prove more general results on the possibility and efficiency of
two-party cryptography. In particular, they can be used to prove bounds on the efficiency of implementations of string commitments from
oblivious transfer and, more generally, from resources that distribute trusted correlations to the parties. Moreover, the
impossibility results on implementations of oblivious transfer presented in \cite{WW10} can be improved using these techniques.

\emph{Acknowledgments.}|\,We thank Fr\'ed\'eric Dupuis and J\"urg Wullschleger for helpful and inspiring discussions. We acknowledge support from the Swiss National Science Foundation (grant no.\
200020-135048), the European Research Council (grant no.\ 258932), and
an ETHIIRA grant of ETH's research commission.

\bibliographystyle{apsrev4-1}

\begin{widetext}
\section{appendix}
Section~A contains general definitions and technical lemmas
related to distance measures and the smooth entropy calculus, as
needed for our work. In Section~B we present the full
proofs of our main results.

\subsection{A. Preliminaries}\label{sec:pr}
We restrict our attention to finite-dimensional Hilbert spaces $\h$. We use $\posops{\h}$ to denote the set of positive semi-definite operators on $\h$. We define the set of normalized quantum states by $\normstates{\h} := \{ \rho \in \posops{\h} : \tr\,\rho = 1 \}$ and the set of sub-normalized states by $\subnormstates{\h} := \{ \rho \in
\posops{\h} : 0 < \tr\,\rho \leq 1 \}$.  Given a state $\rhoAB \in \normstates{\hA\kron\hB}$ we denote by $\rhoA$ and $\rhoB$ its marginal states $\rhoA=\ptrace{B}{\rhoAB}$ and $\rhoB=\ptrace{A}{\rhoAB}$. We define the fidelity between two states $\rho,\tau \in \normstates{\rho}$ as $F(\rho,\tau)=\norm{\sqrt{\rho}\sqrt{\tau}}{1}$. For $\rho,\tau \in \normstates{\hA}$, we define the 
\emph{trace distance} between $\rho$ and $\tau$ as
\begin{align*}
 \dis(\rho,\tau):=\frac{1}{2} \norm{\rho - \tau}{1}.
\end{align*}
For $b \in \{0,1\}$, let $\rho_{XB}^b=\sum_x\vecstate{x}\kron \rho_B^{x,b}$ be classical-quantum (CQ) states. Then we have (see \cite{Renner2005} for a proof)
\begin{align}\label{equ:distance-mix}
\norm{\rho_{XB}^0-\rho_{XB}^1}{1}=\sum_{x \in \mX}\norm{\rho_B^{x,0}-\rho_B^{x,1}}{1}.                                                                                                                                                         
\end{align}                                                                                                                                                             
%
%
%
%
\begin{definition}
  \label{def:dist-from-uniform}
  For $\rhoAB \in \normstates{\hAB}$ we define the \emph{distance
    from uniform of A conditioned on B} as
  \begin{align}
    \label{eqn:dist-from-uniform}
    \duni{A}{B}{\rho} := \min_{\sigmaB}\
     \dis(\rhoAB, \omega_{A} \kron \sigmaB)\,,
  \end{align}
  where $\omega_{A} := \idA/\dim \hA$ and the minimum is taken over all $\sigmaB \in \normstates{\hB}$.
\end{definition}

\begin{lemma}\label{lem:distance}
 Let $\rho_{XB}=\sum_{x \in \{0,1\}}\frac{1}{2}\ket{x}\bra{x}\kron \rhoB^x$ be a CQ state and $\duni{X}{B}{\rho}\leq \eps$. Then
\[\dis(\rhoB^0,\rhoB^1)\leq 2\eps.\]
\end{lemma}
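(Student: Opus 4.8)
The plan is to unfold the definition of $\duni{X}{B}{\rho}$ and then reduce the statement to the triangle inequality for the trace distance. Since the set of normalized states on $\hB$ is compact and the trace distance is continuous, the minimum in Definition~\ref{def:dist-from-uniform} is attained; the hypothesis therefore provides a normalized state $\sigmaB$ with $\dis(\rho_{XB},\omega_X\kron\sigmaB)\leq\eps$, where $\omega_X:=\idX/2$.

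Next I would rewrite both $\rho_{XB}$ and $\omega_X\kron\sigmaB$ as CQ states over the classical register $X$ on $\{0,1\}$, namely $\rho_{XB}=\sum_{x\in\{0,1\}}\ket{x}\bra{x}\kron(\tfrac12\rhoB^x)$ and $\omega_X\kron\sigmaB=\sum_{x\in\{0,1\}}\ket{x}\bra{x}\kron(\tfrac12\sigmaB)$, and apply identity~\eqref{equ:distance-mix} with the conditional operators $\tfrac12\rhoB^x$ and $\tfrac12\sigmaB$ (the identity only uses the block-diagonal structure and holds for sub-normalized operators as well). This gives
\[
\dis(\rho_{XB},\omega_X\kron\sigmaB)=\tfrac14\big(\norm{\rhoB^0-\sigmaB}{1}+\norm{\rhoB^1-\sigmaB}{1}\big)=\tfrac12\big(\dis(\rhoB^0,\sigmaB)+\dis(\rhoB^1,\sigmaB)\big).
\]
Combined with the bound of the previous paragraph, this yields $\dis(\rhoB^0,\sigmaB)+\dis(\rhoB^1,\sigmaB)\leq 2\eps$.

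Finally, the triangle inequality for the trace distance gives $\dis(\rhoB^0,\rhoB^1)\leq\dis(\rhoB^0,\sigmaB)+\dis(\sigmaB,\rhoB^1)\leq 2\eps$, which is the claim. I do not anticipate a genuine obstacle: the only point requiring care is the bookkeeping in the application of~\eqref{equ:distance-mix} — in particular noting that $\rho_{XB}$ and $\omega_X\kron\sigmaB$ share the uniform classical marginal so that the formula applies with conditional operators $\tfrac12\rhoB^x$ versus $\tfrac12\sigmaB$ — together with the factor $\tfrac12$ coming from the uniform CQ weights, which is exactly what turns the $\eps$ into $2\eps$ after the triangle inequality.
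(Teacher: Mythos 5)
Your proof is correct and follows essentially the same route as the paper's: extract the optimal $\sigmaB$ from the definition, apply the block-diagonal identity~\eqref{equ:distance-mix} to split the distance into the two conditional terms, and finish with the triangle inequality. The only difference is presentational — you track the factors of $\tfrac12$ explicitly in terms of $\dis$ where the paper works with $\norm{\cdot}{1}$ and a factor of $4$ — which amounts to the same computation.
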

\begin{proof}
$\dis(\rho_{XB}, \omega_{A} \kron \sigmaB)\leq \eps$ implies
\[\norm{\rhoB^0-\rhoB^1}{1}\leq\norm{\rhoB^0-\sigmaB}{1}+\norm{\rhoB^1-\sigmaB}{1}\leq 4\eps\]
where we used (\ref{equ:distance-mix}) and, therefore, we have $\dis(\rhoB^0,\rhoB^1)\leq 2\eps$.
\end{proof}

Furthermore, we will make use of the following well-known technical lemma which is also used in \cite{Mayers97,LoChau97,Lo97}.
\begin{lemma}\label{lem:uhlmann}
Let $\ket{\psiAB^0}$ and $\ket{\psiAB^1}$ be states with $\dis(\rhoB^0,\rhoB^1)\leq \eps$ where $\rhoB^x=\ptr{A}{\ket{\psiAB^x}\bra{\psiAB^x}}$. Then there exists a unitary $U_{A}$ such that 
\[\dis(\ket{\phiAB^1}\bra{\phiAB^1},\ket{\psiAB^1}\bra{\psiAB^1})\leq \sqrt{2\eps}\]
with $\phiAB^1=(U_{A}\kron \idB)\ket{\psiAB^0}$.
\end{lemma}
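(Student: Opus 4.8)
The plan is to recognise Lemma~\ref{lem:uhlmann} as a quantitative form of Uhlmann's theorem. Write $F(\rho,\tau)=\norm{\sqrt{\rho}\sqrt{\tau}}{1}$ for the fidelity, as in the preliminaries. The key input is that, for the two given purifications $\ket{\psiAB^0}$ and $\ket{\psiAB^1}$ of $\rhoB^0$ and $\rhoB^1$ living on the common space $\hA\kron\hB$, Uhlmann's theorem provides a unitary $U_{A}$ on Alice's system with
\[\Abs{\bracket{\psiAB^1}{U_{A}\kron\idB}{\psiAB^0}}=F(\rhoB^0,\rhoB^1)\,.\]
The reason is that, as $U_{A}$ ranges over all unitaries on $\hA$, the vector $(U_{A}\kron\idB)\ket{\psiAB^0}$ ranges over all purifications of $\rhoB^0$ that fit inside $\hA\kron\hB$, and the largest overlap of such a purification with the fixed purification $\ket{\psiAB^1}$ of $\rhoB^1$ is exactly $F(\rhoB^0,\rhoB^1)$. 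I would then fix an optimal such $U_{A}$ and set $\ket{\phiAB^1}:=(U_{A}\kron\idB)\ket{\psiAB^0}$, so that $\Abs{\braket{\phiAB^1}{\psiAB^1}}=F(\rhoB^0,\rhoB^1)$.

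Next I would convert the hypothesis into a statement about the fidelity. By the Fuchs--van de Graaf inequalities \cite{FG99}, $1-F(\rhoB^0,\rhoB^1)\leq\dis(\rhoB^0,\rhoB^1)\leq\eps$, hence $F(\rhoB^0,\rhoB^1)\geq 1-\eps$. Combining this with the elementary identity for the trace distance between pure states, $\dis(\ket{\phi}\bra{\phi},\ket{\psi}\bra{\psi})=\sqrt{1-\Abs{\braket{\phi}{\psi}}^2}$, applied to $\ket{\phiAB^1}$ and $\ket{\psiAB^1}$, gives
\[\dis(\ket{\phiAB^1}\bra{\phiAB^1},\ket{\psiAB^1}\bra{\psiAB^1})=\sqrt{1-F(\rhoB^0,\rhoB^1)^2}\leq\sqrt{1-(1-\eps)^2}\leq\sqrt{2\eps}\,,\]
which is the claimed bound. (Alternatively, one can phrase the same computation in terms of the purified distance used elsewhere in this work, noting that it coincides with the trace distance on pure states and is bounded by $\sqrt{2\eps}$ via \cite{TCR10}.)

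The only step that needs a little care is the dimension bookkeeping behind the application of Uhlmann's theorem: one must check that the purification of $\rhoB^0$ that is optimal against the fixed $\ket{\psiAB^1}$ can be taken inside $\hA\kron\hB$ and be reached from $\ket{\psiAB^0}$ by a genuine \emph{unitary} on $\hA$, rather than merely by an isometry into a larger space. This holds because $\ket{\psiAB^0}$ is itself a purification of $\rhoB^0$ on $\hA\kron\hB$, so $\rank\rhoB^0\leq\dim\hA$, and likewise $\rank\rhoB^1\leq\dim\hA$; hence all purifications appearing in the argument live on $\hA\kron\hB$, and any two purifications of the same reduced state there differ by a unitary on $\hA$. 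Everything else is a direct chain of standard identities and inequalities, so I do not anticipate any further obstacle.
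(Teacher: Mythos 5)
Your proposal is correct and follows essentially the same route as the paper: apply Uhlmann's theorem to obtain a unitary on $\hA$ achieving overlap $F(\rhoB^0,\rhoB^1)\geq 1-\eps$ (via Fuchs--van de Graaf), then convert back to trace distance using $\dis(\rho,\tau)\leq\sqrt{1-F(\rho,\tau)^2}$ to get $\sqrt{1-(1-\eps)^2}\leq\sqrt{2\eps}$. Your extra care about the dimension bookkeeping (that the optimal purification is reachable by a genuine unitary on $\hA$) is a point the paper glosses over, but it does not change the argument.
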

\begin{proof}
$\dis(\rhoB^0,\rhoB^1)\leq \eps$ implies $F(\rhoB^0,\rhoB^1)\geq 1-\eps$. From Uhlmann's theorem we know that there exists a unitary $U_A$ such that
$F(\ket{\phiAB^1}\bra{\phiAB^1},\ket{\psiAB^1}\bra{\psiAB^1})\geq 1-\eps$ where $\ket{\phiAB^1}=(U_{A}\kron \idB)\ket{\psiAB^0}$. Since $\dis(\rho,\tau)\leq \sqrt{1-F(\rho,\tau)^2}$ for any $\rho,\tau \in \normstates{\h}$ \cite{FG99}, we have $\sqrt{1-\dis(\ket{\phiAB^1}\bra{\phiAB^1},\ket{\psiAB^1}\bra{\psiAB^1})^2}\geq 1-\eps$. Hence,
\begin{align*}
\dis(\ket{\phiAB^1}\bra{\phiAB^1},\ket{\psiAB^1}\bra{\psiAB^1})\leq \sqrt{1-(1-\eps)^2}\leq \sqrt{2\eps}
\end{align*}

\end{proof}
Lemma \ref{lem:uhlmann} can be generalized to states which are pure conditioned on all classical information available to both $A$ and $B$ in the following way.
\begin{lemma}\label{lem:classical-attack}
For $b \in \{0,1\}$, let 
\[\rho_{XX'AB}^b=\sum_xP_b(x)\vecstate{x}_{X}\kron \vecstate{x}_{X'}\kron \vecstate{\psiAB^{x,b}}\]
with $\dis(\rho_{X'B}^0,\rho_{X'B}^1)\leq \eps$. Then there exists a unitary $U_{AX}$ such that 
\[\dis(\rho_{XX'AB}'^1,\rho_{XX'AB}^1)\leq 2 \eps \]
where $\rho_{XX'AB}'^1=(U_{XA}\kron \id_{X'B})\rho_{XX'AB}^0(U_{XA}\kron \id_{X'B})^\dagger$.
\end{lemma}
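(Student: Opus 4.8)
The plan is to reduce Lemma~\ref{lem:classical-attack} to Lemma~\ref{lem:uhlmann} by coherently purifying the classical registers, so that the problem becomes one about two \emph{pure} states whose marginals on Bob's side are close, to which Uhlmann's theorem applies directly. First I would introduce a reference register $R$ and consider the purifications
\[
\ket{\Psi^b}_{XX'ABR}=\sum_x\sqrt{P_b(x)}\;\ket{x}_X\ket{x}_{X'}\ket{\psiAB^{x,b}}\ket{x}_R ,
\]
which satisfy $\ptrace{R}{\vecstate{\Psi^b}}=\rho^b_{XX'AB}$, because $R$ records a classical copy of $x$ and hence removes all $X$-off-diagonal terms upon tracing. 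The decisive structural feature is that $R$ carries the \emph{same} classical label as $X'$.

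Next I would identify the marginal that Uhlmann's theorem controls. Since the unitary we are permitted to build acts only on $XA$, the ``fixed'' subsystem is $X'BR$, with marginal $\rho^b_{X'BR}=\sum_x P_b(x)\,\proj{x}{x}_{X'}\kron\rho_B^{x,b}\kron\proj{x}{x}_R$, where $\rho_B^{x,b}=\ptrace{A}{\vecstate{\psiAB^{x,b}}}$. Because $X'$ and $R$ always agree, I would treat $X'R$ as a single classical register and invoke \eqref{equ:distance-mix} to conclude $\dis(\rho^0_{X'BR},\rho^1_{X'BR})=\dis(\rho^0_{X'B},\rho^1_{X'B})\le\eps$; that is, appending the classically correlated reference does not enlarge Bob's trace distance.

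I would then apply Lemma~\ref{lem:uhlmann} (Uhlmann's theorem) to the pure states $\ket{\Psi^0},\ket{\Psi^1}$ with fixed subsystem $X'BR$: there is a unitary $U_{XA}$, acting on the complementary system $XA$ only and leaving $R$ untouched, such that $\ket{\tilde\Psi^1}:=(U_{XA}\kron\idi{X'BR})\ket{\Psi^0}$ is close to $\ket{\Psi^1}$. Crucially, since $U_{XA}$ commutes with $\ptrace{R}{\cdot}$, tracing out $R$ yields $\ptrace{R}{\vecstate{\tilde\Psi^1}}=\rho'^1_{XX'AB}$ and $\ptrace{R}{\vecstate{\Psi^1}}=\rho^1_{XX'AB}$ exactly the two states in the statement; monotonicity of the trace distance under the partial trace then transfers the bound down to them.

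The main obstacle is the final conversion from closeness of Bob's marginals to the trace distance of the transformed states, as this is the step that pins down the constant. Using $F(\ket{\tilde\Psi^1},\ket{\Psi^1})=F(\rho^0_{X'BR},\rho^1_{X'BR})\ge 1-\eps$ together with the Fuchs--van de Graaf estimate $\dis\le\sqrt{1-F^2}$ reproduces exactly the pure-state conversion of Lemma~\ref{lem:uhlmann} and the bound appearing in the main-text attack \eqref{eq:classical-attack}. To reach the sharper constant stated here the delicate point is to perform this conversion while fully exploiting that $X'$ (hence $R$) is classical and perfectly correlated with Bob's label, so that the estimate is carried out block-by-block in $x$ via \eqref{equ:distance-mix} rather than on the global purification; verifying that this classical bookkeeping is what controls $\dis(\rho'^1_{XX'AB},\rho^1_{XX'AB})\le 2\eps$ is the step I expect to require the most care.
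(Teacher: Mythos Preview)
Your approach is essentially identical to the paper's: the paper also introduces an extra classical copy register (called $X''$ rather than your $R$), forms the purification $\sum_x\sqrt{P_b(x)}\,\ket{x}_X\ket{x}_{X'}\ket{x}_{X''}\ket{\psi^{x,b}_{AB}}$, observes $\dis(\rho^0_{X'X''B},\rho^1_{X'X''B})=\dis(\rho^0_{X'B},\rho^1_{X'B})\le\eps$, applies Lemma~\ref{lem:uhlmann} on the cut $XA\,|\,X'X''B$, and then traces out $X''$ using monotonicity. Your final paragraph's worry is unnecessary: the paper's own proof also concludes with the bound $\sqrt{2\eps}$, in agreement with the main-text estimate~\eqref{eq:classical-attack}; the ``$2\eps$'' printed in the lemma statement is simply a typo, so no sharper block-by-block argument is called for.
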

\begin{proof}
Define $\ket{\psi_{XX'X''AB}^b}:=\sum_x\sqrt{P_b(x)}\ket{x}_X\kron\ket{x}_{X'}\kron\ket{x}_{X''}\kron\ket{\psiAB^{x,b}}$ and let
\[\rho_{X'X''B}^b=\ptrace{XA}{\vecstate{\psi_{XX'X''AB}^b}}.\]
Then
\[\dis(\rho_{X'X''B}^0,\rho_{X'X''B}^1)=\dis(\rho_{X'B}^0,\rho_{X'B}^1)\leq \eps\]
Thus, Lemma \ref{lem:uhlmann} implies the existence of a unitary $U_{AX}$ such that 
\[\dis(\vecstate{\phi_{XX'X''AB}^1},\vecstate{\psi_{XX'X''AB}^1})\leq \sqrt{2\eps}\]
with $\ket{\phi_{XX'X''AB}^1}=(U_{AX}\kron \id_{X'X''B})\ket{\psi_{XX'X''AB}^0}$. The statement then follows from the fact that taking the partial trace over $X''$ cannot increase the trace distance and commutes with the unitary $U_{AX}$ as follows. Let  $\rho_{XX'AB}'^1=(U_{XA}\kron \id_{X'B})\rho_{XX'AB}^0(U_{XA}\kron \id_{X'B})^\dagger$. Then
\begin{align*}
\dis((U_{XA}\kron \id_{X'B})&\rho_{XX'AB}^0(U_{XA}\kron \id_{X'B})^\dagger,\rho_{XX'AB}^1)\\
&=\dis((U_{XA}\kron \id_{X'B})\ptrace{X''}{\rho_{XX'X''AB}^0}(U_{XA}\kron \id_{X'B})^\dagger,\ptrace{X''}{\rho_{XX'X''AB}^1})\\
&=\dis(\ptrace{X''}{(U_{XA}\kron \id_{X'X''B})\rho_{XX'X''AB}^0(U_{XA}\kron \id_{X'X''B})^\dagger},\ptrace{X''}{\rho_{XX'X''AB}^1})\\
&\leq\dis((U_{XA}\kron \id_{X'X''B})\rho_{XX'X''AB}^0(U_{XA}\kron \id_{X'X''B})^\dagger,\rho_{XX'X''AB}^1)\\
&\leq \sqrt{2\eps}
\end{align*}
\end{proof}


We define the non-smooth min-entropy as follows. 

\begin{definition}[Min-Entropy]
\begin{align*}
\chmin{A}{B}{\rho} := \max_{\sigmaB \in \normstates{\hB}}\ \sup \big\{
  \lambda \in \mathbb{R} :  2^{-\lambda}\, \idA \kron \sigmaB \geq
  \rhoAB \big\} \, .
\end{align*}
\end{definition}

Then we define the smooth version of the min-entropy of a state $\rho$ as an optimization of the non-smooth entropy over a set of states that are close to $\rho$. As a distance measure between two states we use the purified distance, which corresponds to the minimum trace distance between purifications of these states \cite{TCR10}. 
\begin{definition}[Purified Distance]
  For $\rho, \tau \in \subnormstates{\h}$, we define the
  \emph{purified distance} between $\rho$ and $\tau$ as
\begin{equation*}
P(\rho, \tau) := \sqrt{1 - \bar{F}(\rho, \tau)^2}
\end{equation*}
where the generalized fidelity $\bar{F}$ is defined as $\bar{F}(\rho, \tau)=F(\rho,\tau)+\sqrt{(1-\tr\,\rho)(1-\tr\,\tau)}$. Note that $\bar{F}(\rho, \tau)=F(\rho, \tau)$ if at least one of the states is normalized.
\end{definition}

Let $\eps \geq 0$ and $\rho \in \subnormstates{\h}$ with
  $\sqrt{\tr\,\rho} > \eps$. Then, we define
  an \emph{$\eps$-ball} in $\h$ around $\rho$ as
  \begin{equation*}
    \label{eqn:def-eps-ball}
    \epsball{\eps}{\h ; \rho} := \{ \tau \in \subnormstates{\h} :
    P(\tau, \rho) \leq \eps \} \, .
  \end{equation*}

The smoothed version of the min-entropy is defined as follows.

\begin{definition}[Smooth Min-Entropy]
  \label{def:mineps}
  Let $\eps \geq 0$ and $\rhoAB \in \subnormstates{\hAB}$,
  then the \emph{$\eps$-smooth min-entropy} of A conditioned on B of
  $\rhoAB$ is defined as
  \begin{equation*}
    \chmineps{\eps}{A}{B}{\rho} := \!\! \max_{\rhotAB \in
      \epsball{\eps}{\rhoAB}} \! \chmin{A}{B}{\rhot}\, .
  \end{equation*}
\end{definition}

A family $\mF$ of functions from $\mX$ to $\mZ$ is called weakly two-universal \cite{CW79} if for any pair of distinct inputs $x$ and $x'$ the probability of a collision $f(x)=f(x')$ is at most $1/|\mZ|$ if $f$ is chosen at random from $\mF$. The following lemma \cite{Renner2005} (see also \cite{Tom2010}) shows that weak two-universal hash functions are strong extractors against quantum side information, i.e., the output of the function is uniform with respect to the side information and the choice of the function. 
\begin{lemma}[Leftover Hash Lemma]\label{lem:leftover-hash} 
Let $\mF$ be a family of weak two-universal hash functions from $\mX$ to $\{0,1\}$. Let $\rho_{XB}=\sum_{x}P(x)\ket{x}\bra{x}\kron \rhoB^x$ be a CQ state and $\rho_{FZB}=\frac{1}{|\mF|}\sum_f\sum_z \ket{f}\bra{f}\kron\ket{z}\bra{z}\kron\rho_B^{f,z}$ with $z \in \{0,1\}$ and  $\rho_B^{f,z}=\sum_{x \in f^{-1}(z)}P(x)\rho_B^{x}$. Then
\begin{align*} 
  \duni{Z}{BF}{\rho} \leq \epsilon  + \frac{1}{2} \sqrt{2^{1-\chmineps{\eps}{X}{B}{\rho} }}.
\end{align*}
\end{lemma}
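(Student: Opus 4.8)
The plan is to reduce Lemma~\ref{lem:leftover-hash} to its non-smooth ($\eps=0$) version by a routine smoothing argument, and then to establish the non-smooth statement by the standard Cauchy--Schwarz / two-universality estimate.

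First I would deal with the smoothing. By Definition~\ref{def:mineps} there is a state $\rhot_{XB}\in\epsball{\eps}{\rho_{XB}}$ with $\chmin{X}{B}{\rhot}=\chmineps{\eps}{X}{B}{\rho}$. Pinching $\rhot_{XB}$ in the computational basis of $X$ fixes every operator of the form $2^{-\lambda}\id_X\kron\sigma_B$, hence it cannot decrease $\chmin{X}{B}{\cdot}$, and it cannot increase the purified distance to the already classical state $\rho_{XB}$; so I may assume $\rhot_{XB}=\sum_x\tilde P(x)\vecstate{x}\kron\rhot_B^x$ is itself CQ. Now I apply to both $\rho_{XB}$ and $\rhot_{XB}$ the trace-preserving completely positive map that appends a uniform register $F$, records $Z=f(X)$ in a fresh register, and discards $X$: the images are the state $\rho_{FZB}$ of the lemma and some $\rhot_{FZB}$. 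Since the purified distance is monotone under such maps \cite{TCR10}, $P(\rhot_{FZB},\rho_{FZB})\le\eps$, and therefore $\dis(\rhot_{FZB},\rho_{FZB})\le\eps$ because the trace distance never exceeds the purified distance \cite{TCR10}. Hence, for every $\sigma_{BF}$, the triangle inequality for $\dis$ gives $\dis(\rho_{FZB},\omega_Z\kron\sigma_{BF})\le\eps+\dis(\rhot_{FZB},\omega_Z\kron\sigma_{BF})$, and taking the minimum over $\sigma_{BF}$ yields $\duni{Z}{BF}{\rho}\le\eps+\duni{Z}{BF}{\rhot}$. It remains to prove the non-smooth bound $\duni{Z}{BF}{\rhot}\le\tfrac12\sqrt{2^{1-\chmin{X}{B}{\rhot}}}$.

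For this I would argue as in the classical proof of the leftover hash lemma. Put $\lambda:=\chmin{X}{B}{\rhot}$ and let $\sigma_B$ be an optimal state in the definition of the min-entropy, so that $\tilde P(x)\,\rhot_B^x\le 2^{-\lambda}\sigma_B$ for every $x$. Choosing $\sigma_{BF}=\sigma_B\kron\id_F/\abs{\mF}$ as comparison state in $\duni{Z}{BF}{\rhot}$ and using the elementary Cauchy--Schwarz bound $\norm{M}{1}\le\sqrt{\tr\gamma}\,\norm{\gamma^{-1/4}M\gamma^{-1/4}}{2}$ (for Hermitian $M$ and $\gamma>0$) with $\gamma=\id_Z\kron\sigma_B\kron\id_F$, then squaring, expanding the Hilbert--Schmidt norm, using that $F$ and $Z$ are classical so only block-diagonal terms survive, and inserting $\rhot_B^{f,z}=\sum_{x\in f^{-1}(z)}\tilde P(x)\rhot_B^x$, the square of $2\,\duni{Z}{BF}{\rhot}$ becomes, up to constants that cancel, the collision sum $\tfrac{1}{\abs{\mF}}\,\mathbb{E}_f\sum_{x,x'}\tilde P(x)\tilde P(x')\,\tr(\sigma_B^{-1/2}\rhot_B^x\sigma_B^{-1/2}\rhot_B^{x'})\,\mathbf{1}[f(x)=f(x')]$. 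I then split at $x=x'$: the diagonal part is $\sum_x\tilde P(x)^2\tr(\sigma_B^{-1/2}\rhot_B^x\sigma_B^{-1/2}\rhot_B^x)\le 2^{-\lambda}$ by the operator inequality (each summand is at most $2^{-\lambda}\tilde P(x)$), while each off-diagonal summand is the trace of a product of two positive operators, hence non-negative, so weak two-universality ($\Pr_f[f(x)=f(x')]\le\tfrac12$ since $\mF$ maps into $\{0,1\}$) bounds the off-diagonal contribution, which is absorbed against the cancelling constants. Collecting terms gives $2\,\duni{Z}{BF}{\rhot}\le\sqrt{2\cdot 2^{-\lambda}}=\sqrt{2^{1-\lambda}}$; together with $\chmin{X}{B}{\rhot}=\chmineps{\eps}{X}{B}{\rho}$ and the smoothing step this proves the lemma. (Equivalently one can package the intermediate bound via the conditional collision entropy $H_2(X|B)_{\rhot}$ and finish with $H_2(X|B)_{\rhot}\ge\chmin{X}{B}{\rhot}$.)

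The only genuinely delicate point, which I expect to be the main obstacle, is the bookkeeping in the non-smooth estimate: the reference operator $\gamma$ must be chosen so that, after Cauchy--Schwarz, the diagonal terms are controlled purely by $2^{-\lambda}$ and the off-diagonal terms purely by \emph{weak} (rather than perfect) two-universality, with the leftover constants cancelling exactly. A subordinate, purely technical nuisance is that the smoothing state $\rhot$ is in general sub-normalized, which forces the use of the generalized fidelity $\bar F$ in the purified distance and of sub-normalized comparison states throughout; this costs nothing but has to be carried along.
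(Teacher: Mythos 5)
The paper itself gives no proof of this lemma---it is imported from Renner's thesis \cite{Renner2005} (see also \cite{Tom2010})---so there is no in-paper argument to compare against; your proposal reconstructs the standard proof from those references. The smoothing half is correct as written: pinching the optimizer of $\chmineps{\eps}{X}{B}{\rho}$ in the $X$-basis preserves both the min-entropy witness and the $\eps$-closeness, pushing $\rho_{XB}$ and $\rhot_{XB}$ through the CPTP map $X\mapsto(F,f(X))$ and using monotonicity of $P$, the bound $\dis\le P$, and the triangle inequality indeed gives $\duni{Z}{BF}{\rho}\le\eps+\duni{Z}{BF}{\rhot}$. The diagonal estimate $\sum_x\tilde P(x)^2\tr(\sigma_B^{-1/2}\rhot_B^x\sigma_B^{-1/2}\rhot_B^x)\le 2^{-\lambda}$ and the constant bookkeeping $\sqrt{\tr\gamma}=\sqrt{2\abs{\mF}}$ versus the $1/\abs{\mF}^2$ from the classical $F$ register are also right.

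The gap sits exactly where you predicted, and your sketch does not resolve it: with the comparison state $\sigma_{BF}=\sigma_B\kron\id_F/\abs{\mF}$, where $\sigma_B$ is the \emph{min-entropy optimizer}, the constants do not cancel the off-diagonal collision terms. Expanding $\sum_z\tr\big((\sigma_B^{-1/4}(\rhot_B^{f,z}-\tfrac12\sigma_B)\sigma_B^{-1/4})^2\big)$, the cross and square terms contribute the constant $-\tr\rhot_B+\tfrac12$, whereas weak two-universality bounds the off-diagonal part only by $\tfrac12\tr(\sigma_B^{-1/2}\rhot_B\sigma_B^{-1/2}\rhot_B)$, which by Cauchy--Schwarz is at least $\tfrac12(\tr\rhot_B)^2$; for normalized $\rhot$ a non-negative residue $\tfrac12\big(\tr(\sigma_B^{-1/2}\rhot_B\sigma_B^{-1/2}\rhot_B)-1\big)$ therefore survives, and it can be large when $\rhot_B$ and $\sigma_B$ differ, so it cannot be ``absorbed.'' The fix is the standard one: since $\duni{Z}{BF}{\rhot}$ is a minimum over comparison states, choose $\omega_Z\kron\rhot_B\kron\id_F/\abs{\mF}$ (the \emph{true} marginal of the smoothed state) as the comparison state, while keeping $\gamma=\id_Z\kron\sigma_B\kron\id_F$ as the weight in the Cauchy--Schwarz step. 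Then the cross and square terms sum to $-\tfrac12\tr(\sigma_B^{-1/2}\rhot_B\sigma_B^{-1/2}\rhot_B)$, which exactly cancels the off-diagonal bound, leaving only the diagonal sum and hence $\duni{Z}{BF}{\rhot}\le\tfrac12\sqrt{2^{1-\lambda}}$. With that single substitution your argument closes.
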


\begin{lemma}\label{lem:leftover} 
Let $\rho_{XB}=\frac{1}{2^\ell}\sum_{x \in \{0,1\}^\ell}\ket{x}\bra{x}\kron \rhoB^x$ be a CQ state. 
Then there exists a function $f: \{0,1\}^\ell\rightarrow \{0,1\}$ in $\mF$ such that
\begin{align*}
  \dis(\rho_B^{f,0},\rho_B^{f,1})\leq 2\left( \epsilon  + \frac{1}{2} \sqrt{2^{1-\chmineps{\eps}{X}{B}{\rho}}}\right),
\end{align*}
where $\rho_B^{f,z}=\frac{1}{|f^{-1}(z)|}\sum_{x \in f^{-1}(z)}\rho_B^{x}$.
\end{lemma}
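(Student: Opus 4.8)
The plan is to read the statement off the Leftover Hash Lemma (Lemma~\ref{lem:leftover-hash}) by averaging over the choice of hash function and then invoking Lemma~\ref{lem:distance}. I would take $\mF$ to be a \emph{balanced} weakly two-universal family of functions $\{0,1\}^\ell\to\{0,1\}$ --- for concreteness the non-trivial linear forms $x\mapsto\langle a,x\rangle\bmod 2$ with $a\neq 0$, for which $|f^{-1}(0)|=|f^{-1}(1)|=2^{\ell-1}$ for every $f\in\mF$. Applying Lemma~\ref{lem:leftover-hash} to $\rho_{XB}$ (with $P(x)=2^{-\ell}$) gives $\duni{Z}{BF}{\rho}\leq\delta$, where $\delta:=\eps+\tfrac12\sqrt{2^{1-\chmineps{\eps}{X}{B}{\rho}}}$ and $\rho_{FZB}=\tfrac1{|\mF|}\sum_f\sum_z\proj{f}{f}\kron\proj{z}{z}\kron\big(\tfrac{|f^{-1}(z)|}{2^\ell}\rho_B^{f,z}\big)$ with $\rho_B^{f,z}$ the normalized states of the statement. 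By the choice of $\mF$ all the weights $|f^{-1}(z)|/2^\ell$ equal $\tfrac12$, so $\rho_{FZB}=\tfrac1{|\mF|}\sum_f\proj{f}{f}\kron\rho^f_{ZB}$ is a uniform mixture of the CQ states $\rho^f_{ZB}:=\tfrac12\sum_z\proj{z}{z}\kron\rho_B^{f,z}$, each of which has uniform $Z$-marginal.

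Next I would extract a single good function. Since $\rho_{FZB}$ is classical on $F$, the state $\sigma_{BF}$ attaining $\duni{Z}{BF}{\rho}$ may be taken classical on $F$ as well, so the trace norm splits over the $f$-blocks; a short computation (using that each $\rho^f_{ZB}$ has uniform $Z$-marginal, so the optimal block is $\tfrac12(\rho_B^{f,0}+\rho_B^{f,1})/|\mF|$) then gives the identity $\duni{Z}{BF}{\rho}=\tfrac1{|\mF|}\sum_{f\in\mF}\duni{Z}{B}{\rho^f}$. Hence $\min_{f}\duni{Z}{B}{\rho^f}\leq\duni{Z}{BF}{\rho}\leq\delta$, so there is some $f\in\mF$ with $\duni{Z}{B}{\rho^f}\leq\delta$. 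For this $f$ the state $\rho^f_{ZB}=\tfrac12\proj{0}{0}\kron\rho_B^{f,0}+\tfrac12\proj{1}{1}\kron\rho_B^{f,1}$ meets the hypothesis of Lemma~\ref{lem:distance}, which therefore gives $\dis(\rho_B^{f,0},\rho_B^{f,1})\leq 2\delta$, i.e. exactly the claimed inequality.

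The bulk of the argument is carried by the two quoted lemmas; the one genuinely delicate point is the blockwise step, i.e. that passing from the function-averaged distance-from-uniform $\duni{Z}{BF}{\rho}$ to a single function is lossless. The subtlety is that Lemma~\ref{lem:leftover-hash} naturally produces \emph{sub-normalized} conditional states weighted by $|f^{-1}(z)|/2^\ell$, whereas the statement is about the normalized ones; insisting on a balanced $\mF$ makes these weights identically $\tfrac12$, which both identifies the conditionals with the normalized states and keeps the $F$-averaged-to-single-function passage exact, yielding the tight constant $2\delta$. (With a merely weakly two-universal, possibly unbalanced $\mF$ the same route still works, at the cost of an extra additive term of order $\max_z\big|\,|f^{-1}(z)|/2^\ell-\tfrac12\,\big|$, which the min-entropy term already controls.) As an alternative to Lemma~\ref{lem:distance} at the very end, one can argue directly: $\duni{Z}{B}{\rho^f}\leq\delta$ supplies a positive operator $\tau_B$ with $\norm{\rho_B^{f,0}-\tau_B}{1}+\norm{\rho_B^{f,1}-\tau_B}{1}\leq 4\delta$, and the triangle inequality closes the argument.
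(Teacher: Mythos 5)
Your proposal is correct and follows essentially the same route as the paper's proof: a balanced weakly two-universal family, Lemma~\ref{lem:leftover-hash} applied to the uniform input, extraction of a single good $f$ by averaging over the classical register $F$, and Lemma~\ref{lem:distance} to convert distance from uniform into the trace distance between $\rho_B^{f,0}$ and $\rho_B^{f,1}$. You are in fact more explicit than the paper about the one non-trivial step (the paper simply asserts ``there must exist a function $f\in\mF$''), and your blockwise identity is valid here because for uniform binary $Z$ one has $\duni{Z}{B}{\rho^f}=\tfrac12\dis(\rho_B^{f,0},\rho_B^{f,1})$, so the per-block optimum is attained at the marginal.
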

\begin{proof}
 Let $\mF$ be a family of two-universal hash functions $f:\{0,1\}^\ell \rightarrow \{0,1\}$ such that every $f$ is balanced, i.e., $|\{x\in\{0,1\}^\ell:~f(x)=0\}|=2^{\ell-1}$. From Lemma \ref{lem:leftover-hash} we know that 
\[\duni{Z}{BCF}{\rho} \leq \delta\]
where $\delta:=\eps  + \frac{1}{2} \sqrt{2^{1-\chmineps{\eps}{X}{B}{\rho} }}$ and $Z:=f(X)$. Thus, there must exist a function $f \in \mF$ such that $\duni{Z}{B}{\rho^{[f]}} \leq~\delta$. 
For $z \in \{0,1\}$ let 
\begin{align*}
 \rho_{B}^{f,z}=\frac{1}{2^{\ell-1}}\sum_{x\in f^{-1}(z)}\rho_{B}^{x}.
\end{align*}
From Lemma~\ref{lem:distance} we then have $\dis(\rho^{f,0}_{BC},\rho^{f,1}_{BC})\leq 2\delta $. 
\end{proof}

The following lemma shows that the conditional min-entropy $\chmineps{\eps}{A}{B}{\rho}$ can decrease by at most $\log |Z |$ when conditioning on an additional classical system $Z$.

\begin{lemma}\label{lem:chain-rule-c}
Let $\eps>0$ and let $\rhoABZ$ be a tripartite state that is classical on $Z$ with respect to some orthonormal basis $\{\ket{z}\}_z$. Then
\[\chmineps{\eps}{A}{BZ}{\rho}\geq \chmineps{\eps}{A}{B}{\rho}-\log |Z|.\]
\end{lemma}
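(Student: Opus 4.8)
The plan is to take an optimal smoothing $\rhotAB$ of $\rhoAB$ for $\chmineps{\eps}{A}{B}{\rho}$, lift it to a smoothing of $\rhoABZ$ that is still classical on $Z$, and then write down an explicit operator dominating this lifted state that certifies the desired min-entropy bound. Concretely, I would first fix $\rhotAB \in \epsball{\eps}{\rhoAB}$ with $\chmin{A}{B}{\rhot} = \chmineps{\eps}{A}{B}{\rho} =: \lambda$, together with a witness $\sigmaB \in \normstates{\hB}$ satisfying $2^{-\lambda}\,\idA\kron\sigmaB \geq \rhotAB$. (If the supremum defining $\chmin{A}{B}{\rhot}$ were not attained, one runs the argument with $\lambda$ replaced by $\lambda-\eta$ and lets $\eta\to0$ at the end.)

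The key step is to produce a state $\rhotABZ \in \epsball{\eps}{\rhoABZ}$ that is classical on $Z$ and whose $AB$-marginal equals $\rhotAB$. To this end I would fix a purification $\ket{\rho}_{ABZR}$ of $\rhoABZ$, which is at the same time a purification of $\rhoAB$ with purifying system $ZR$; by Uhlmann's theorem in the purified-distance picture~\cite{TCR10} there is then a purification $\ket{\rhot}_{ABZR}$ of $\rhotAB$ on the same space (after harmlessly enlarging $R$ if needed) with $P(\ket{\rhot}_{ABZR},\ket{\rho}_{ABZR}) = P(\rhotAB,\rhoAB)\leq\eps$. Applying to both pure states the CPTP map that traces out $R$ and then dephases $Z$ in the basis $\{\ket{z}\}_z$, and invoking monotonicity of the purified distance, I obtain $\rhotABZ := \sum_z \proj{z}{z}_Z \kron \rhot_{AB}^z$ with $\rhot_{AB}^z := \bra{z}_Z\,\ptrace{R}{\vecstate{\rhot}_{ABZR}}\,\ket{z}_Z \geq 0$; this state is classical on $Z$, has $AB$-marginal $\sum_z\rhot_{AB}^z = \rhotAB$, and is $\eps$-close to $\rhoABZ$ because the same map sends $\ket{\rho}_{ABZR}$ to $\rhoABZ$ (dephasing acts trivially on the already-classical $\rhoABZ$). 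Hence $\rhotABZ\in\epsball{\eps}{\rhoABZ}$.

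It then remains to bound $\chmin{A}{BZ}{\rhot}$ from below, which is routine: setting $\sigma_{BZ}:=\sigmaB\kron\idZ/|Z|\in\normstates{\hi{BZ}}$ and using $\rhot_{AB}^z \leq \sum_{z'}\rhot_{AB}^{z'} = \rhotAB \leq 2^{-\lambda}\,\idA\kron\sigmaB$ for every $z$, one gets
\begin{align*}
2^{-(\lambda-\log|Z|)}\,\idA\kron\sigma_{BZ} = \sum_z \proj{z}{z}_Z\kron\big(2^{-\lambda}\,\idA\kron\sigmaB\big) \;\geq\; \sum_z\proj{z}{z}_Z\kron\rhot_{AB}^z = \rhotABZ\,,
\end{align*}
so $\chmineps{\eps}{A}{BZ}{\rho}\geq\chmin{A}{BZ}{\rhot}\geq\lambda-\log|Z|=\chmineps{\eps}{A}{B}{\rho}-\log|Z|$, as claimed. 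The only genuinely delicate point is the construction of $\rhotABZ$: we need a smoothing of $\rhoABZ$ that is simultaneously classical on $Z$ and carries the prescribed $AB$-marginal, and the purification--Uhlmann--dephasing trick is exactly what delivers it; the rest is bookkeeping. (Alternatively one could derive the bound from the duality between smooth min- and max-entropy together with the corresponding statement for the max-entropy, but the direct argument above is self-contained given the tools already set up here.)
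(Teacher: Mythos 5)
Your proof is correct and follows essentially the same route as the paper's: both lift the optimal smoothing $\rhotAB$ to an $\eps$-close extension of $\rhoABZ$ that is classical on $Z$ with $AB$-marginal $\rhotAB$, decompose it as $\sum_z \rhot_{AB}^z\kron\proj{z}{z}$ with $\rhot_{AB}^z\leq\rhotAB$, and dominate it by $2^{-\lambda}\,\idA\kron\sigmaB\kron\idZ$, i.e.\ by the candidate $\sigmaB\kron\idZ/|Z|$ at the shifted exponent. The only difference is that you spell out the purification--Uhlmann--dephasing construction of the classical $\eps$-close extension explicitly, whereas the paper delegates exactly this step to a citation of~\cite{TCR10}.
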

\begin{proof}
  Let $\rhotAB$ be the state that optimizes the min-entropy
  $\chmineps{\eps}{A}{B}{\rho} = \chmin{A}{B}{\rhot}$. Then, there exists an
  extension $\idx{\rhot}{ABZ}$ of $\rhotAB$ that is $\eps$-close to
  $\idx{\rho}{ABZ}$ and classical on Z. See~\cite{TCR10}, where it is shown that there always exists an $\eps$-close extension and that the purified distance can only decrease under a measurement in the Z basis. Let $\idx{\rhot}{ABZ} =
  \sum_z \rhotAB^z \kron \proj{z}{z}$ so that $\rhotAB^z \leq \rhotAB$
  for all $z$. By the definition of the min-entropy, we have
  $$ \rhotAB^z \leq \rhotAB \leq 2^{-\chmineps{\eps}{A}{B}{}} \idA \kron
  \sigmaB $$ for the optimal $\sigmaB$. Hence, $$ \rhotABC = \sum
  \rhotAB^z \kron \proj{z}{z} \leq 2^{-\chmineps{\eps}{A}{B}{}} \idA \kron \sigmaB \kron \idZ \, .
$$ The lemma now follows from the definition of
  the min-entropy $\chmineps{\eps}{A}{BZ}{\rho}$, where
  $\idx{\rhot}{ABZ}$ and $\idx{\sigma}{BZ} = \sigmaB \kron \idZ /
  \abs{\textnormal{Z}}$ are candidates for the optimization.
\end{proof}

The following lemma shows that the min-entropy $\chmineps{\eps}{A}{BC}{\rho}$ cannot increase too much when a projective measurement is applied to system $C$.

 \begin{lemma}
  \label{lem:data-processing-bound}
  Let $\eps \geq 0$ and let $\rhoABC$ be a tri-partite state. Furthermore, 
  let $\cM$ be a projective measurement in the basis $\{
  \ket{z} \}_z$ on C and 
  $\idx{\rho}{ABZ} := \idx{\cI}{AB} \kron \cM (\rhoABC)$, where $\idx{\cI}{AB}$ is the identity operation on A and B. Then,
  \begin{align*}
  	\chmineps{\eps}{A}{BC}{\rho} \geq \chmineps{\eps}{A}{BZ}{\rho} - 
  	\log \abs{\textnormal{Z}} \, .
  \end{align*}
\end{lemma}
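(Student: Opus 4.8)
\emph{Strategy.} I would isolate the one nontrivial ingredient — an operator inequality that follows from Cauchy--Schwarz — deduce the non-smooth statement from it in a line, and then reduce the smooth statement to the non-smooth one by dilating the measurement. Throughout, identify the output register $Z$ with $C$, i.e.\ regard $\cM$ as the pinching $\cM(X)=\sum_z(\idAB\kron\proj{z}{z})\,X\,(\idAB\kron\proj{z}{z})$ on $C$, so that $\rho_{ABZ}=(\opidi{AB}\kron\cM)(\rho_{ABC})$ with the pinched $C$-register renamed $Z$ and $|Z|$ the number of basis elements. The key claim is that for \emph{every} positive semidefinite $\tau_{ABC}$,
\begin{align*}
 \tau_{ABC}\ \leq\ |Z|\,(\opidi{AB}\kron\cM)(\tau_{ABC})\,.
\end{align*}
By the spectral decomposition and linearity it suffices to check this for $\tau_{ABC}=\proj{\Psi}{\Psi}$. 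Write $\ket{\Psi}=\sum_z\ket{\Psi_z}$ with $\ket{\Psi_z}:=(\idAB\kron\proj{z}{z})\ket{\Psi}$, so $(\opidi{AB}\kron\cM)(\proj{\Psi}{\Psi})=\sum_z\proj{\Psi_z}{\Psi_z}$. For any $\ket{\Phi}$, the Cauchy--Schwarz bound $\big|\sum_z a_z\big|^2\leq|Z|\sum_z|a_z|^2$ with $a_z=\braket{\Phi}{\Psi_z}$ gives $|\braket{\Phi}{\Psi}|^2\leq|Z|\sum_z|\braket{\Phi}{\Psi_z}|^2=|Z|\,\bracket{\Phi}{(\opidi{AB}\kron\cM)(\proj{\Psi}{\Psi})}{\Phi}$, and since $\ket{\Phi}$ is arbitrary the claim follows.

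\emph{Non-smooth bound.} Fix any $\tau_{ABC}$ and let $\mu:=\chmin{A}{BZ}{(\opidi{AB}\kron\cM)(\tau)}$, witnessed by a normalized $\sigma_{BZ}$ with $(\opidi{AB}\kron\cM)(\tau_{ABC})\leq 2^{-\mu}\,\idA\kron\sigma_{BZ}$. Since the left-hand side is classical on $Z$, applying the dephasing map on $Z$ to both sides (a positive, trace-preserving operation that fixes the left-hand side) lets us take $\sigma_{BZ}=\sum_z\sigma_B^z\kron\proj{z}{z}$; then $\sigma_{BC}:=\sum_z\sigma_B^z\kron\proj{z}{z}$ is normalized and $(\opidi{AB}\kron\cM)(\tau_{ABC})\leq 2^{-\mu}\idA\kron\sigma_{BC}$. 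Multiplying by $|Z|$ and invoking the operator inequality, $\tau_{ABC}\leq 2^{-(\mu-\log|Z|)}\idA\kron\sigma_{BC}$, hence $\chmin{A}{BC}{\tau}\geq\chmin{A}{BZ}{(\opidi{AB}\kron\cM)(\tau)}-\log|Z|$ for every $\tau_{ABC}$. Taking $\tau_{ABC}=\rho_{ABC}$ already proves the lemma for $\eps=0$.

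\emph{Smoothing (the delicate step).} Let $\bar\rho_{ABZ}\in\epsball{\eps}{\rho_{ABZ}}$ attain $H:=\chmineps{\eps}{A}{BZ}{\rho}$; as $\rho_{ABZ}$ is classical on $Z$ and dephasing on $Z$ neither increases the purified distance to it nor decreases $\chmin{A}{BZ}{\cdot}$, we may take $\bar\rho_{ABZ}$ classical on $Z$ (and identify it with a classical state on the $C$-register). The pinching $\cM$ has a Stinespring dilation $V\colon\hC\to\hC\kron\hE$, $V\ket{z}:=\ket{z}_C\kron\ket{z}_E$, with $(\opidi{AB}\kron\cM)(\cdot)=\ptr{E}(V(\cdot)V^\dagger)$; set $\rho_{ABCE}:=V\rho_{ABC}V^\dagger$, so $\ptr{E}(\rho_{ABCE})=(\opidi{AB}\kron\cM)(\rho_{ABC})=\rho_{ABZ}$. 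Because an $\eps$-close state admits an $\eps$-close extension \cite{TCR10}, choose $\bar\rho_{ABCE}$ extending $\bar\rho_{ABZ}$ with $P(\bar\rho_{ABCE},\rho_{ABCE})\leq\eps$, and put $\tau_{ABC}:=V^\dagger\bar\rho_{ABCE}V$. Then $V\tau_{ABC}V^\dagger$ is the projection of $\bar\rho_{ABCE}$ onto the range of $V$, which contains $\supp{\rho_{ABCE}}$, so by isometric invariance of $P$ and the fact that such a projection does not increase $P$ \cite{TCR10} we get $P(\tau_{ABC},\rho_{ABC})=P(V\tau_{ABC}V^\dagger,\rho_{ABCE})\leq\eps$, i.e.\ $\tau_{ABC}\in\epsball{\eps}{\rho_{ABC}}$. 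Moreover a short computation from $V\ket{z}=\ket{z}_C\ket{z}_E$, using the $Z$-classicality of $\bar\rho_{ABZ}$ and the elementary bound $\bra{z}_E M_{ABE}\ket{z}_E\leq\ptr{E}(M_{ABE})$ for $M\geq0$, gives $(\opidi{AB}\kron\cM)(\tau_{ABC})=\ptr{E}(V\tau_{ABC}V^\dagger)\leq\bar\rho_{ABZ}$, whence $\chmin{A}{BZ}{(\opidi{AB}\kron\cM)(\tau)}\geq\chmin{A}{BZ}{\bar\rho}=H$. Plugging $\tau_{ABC}$ into the non-smooth bound yields $\chmineps{\eps}{A}{BC}{\rho}\geq\chmin{A}{BC}{\tau}\geq H-\log|Z|$, which is the claim. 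The only real obstacle is exactly this transport of the optimal smoothing from the neighborhood of $\rho_{ABZ}$ to that of $\rho_{ABC}$; the slack $\leq\bar\rho_{ABZ}$ in place of equality is harmless because $\chmin{A}{BZ}{\cdot}$ is monotone under $\leq$, and everything else is the bookkeeping above.
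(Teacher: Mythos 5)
Your proof is correct. Structurally it shares its skeleton with the paper's argument: both dilate the pinching to the isometry $V\colon\ket{z}\mapsto\ket{zz}$, both transport the optimal smoothing state of $\idx{\rho}{ABZ}$ into the $\eps$-ball around the dilated state via an $\eps$-close extension and a projection onto the range of $V$ (citing the same two facts from the purified-distance toolbox of~\cite{TCR10}), and both pay the dimension factor through an operator inequality. The genuine differences are in the key lemma and the packaging. You prove and use the pinching inequality $\tau_{ABC}\leq |Z|\,(\opidi{AB}\kron\cM)(\tau_{ABC})$ via Cauchy--Schwarz, and from it derive a clean non-smooth statement $\chmin{A}{BC}{\tau}\geq\chmin{A}{BZ}{(\opidi{AB}\kron\cM)(\tau)}-\log|Z|$ before smoothing; the paper instead stays on the dilated side throughout, invoking isometric invariance of the smooth min-entropy and the extension bound $\idx{X}{AB}\leq|B|\,\idx{X}{A}\kron\idB$, which it proves by a rank-one spectral argument. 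The two operator inequalities are interderivable (conjugating the extension bound by $V^\dagger$ and using classicality on $C$ recovers the pinching bound), so neither route is more general, but your Cauchy--Schwarz derivation is more elementary, and isolating the $\eps=0$ case makes the one delicate step --- moving the smoothing state from the neighbourhood of $\idx{\rho}{ABZ}$ to that of $\rhoABC$ --- easier to audit. Your verifications of $P(\tau_{ABC},\rhoABC)\leq\eps$ and of $(\opidi{AB}\kron\cM)(\tau_{ABC})\leq\bar\rho_{ABZ}$ (with the harmless slack absorbed by monotonicity of $\chmin{A}{BZ}{\cdot}$ under the operator order) both check out, so there is no gap.
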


\begin{proof}
  Let $U: \keti{z}{C} \mapsto \keti{zz}{ZZ$'$}$ be the isometry purifying $\cM$
  in the sense that $\idx{\rho}{ABZ} = \ptr{Z$'$} (\idx{\rho}{ABZZ'})$, where 
  $\idx{\rho}{ABZZ'} := U \rhoABC U^\dagger$. Covariance under isometries of
  the smooth min-entropy implies
  \begin{align*}
  	\chmineps{\eps}{A}{BC}{\rho} = \chmineps{\eps}{A}{BZZ'}{\rho} \, .
  \end{align*}
  Moreover, for some states $\idx{\rhot}{ABZ}$ and $\idx{\sigmat}{BZ}$, we have
  \begin{align}
  	\chmineps{\eps}{A}{BZ}{\rho} &= \sup \big\{ \lambda \in \mathbb{R} : \idx{\rhot}{ABZ} 
  	\leq 2^{-\lambda}\, \idA \kron \idx{\sigmat}{BZ} \big\} \nonumber \\
  	&\leq \sup \big\{ \lambda \in \mathbb{R} : \idx{\rhot}{ABZZ'} \leq 2^{-\lambda}\,
  	 \abs{\textnormal{Z}} \, \idA \kron \idx{\sigmat}{BZZ'}  \big\} \label{eqn:rel} \\
  	&\leq \chmineps{\eps}{A}{BZZ'}{\rho} + \log \abs{\textnormal{Z}} \nonumber \, . 
  \end{align}
  Here, $\idx{\rhot}{ABZZ'}$ is an extension of $\idx{\rhot}{ABZ}$ that is $\eps$-close
  to $\idx{\rho}{ABZZ'}$ and satisfies 
  $\idx{\Pi}{ZZ'} \idx{\rhot}{ABZZ'} \idx{\Pi}{ZZ'} = \idx{\rhot}{ABZZ'}$, 
  where $\idx{\Pi}{ZZ'} := \sum_z \proji{zz}{zz}{ZZ'}$. The existence of such an extension can be deduced from the fact that projections can only decrease the purified distance~\cite{TCR10} and $\idx{\Pi}{ZZ'}$ commutes with $\idx{\rho}{ABZZ'}$. Furthermore, 
  $\idx{\sigmat}{BZZ'} := \idx{\Pi}{ZZ'} (\idx{\sigmat}{BZ} \kron \idi{Z'}) \idx{\Pi}{ZZ'}$.
  The last inequality follows since $\idx{\rhot}{ABZZ'}$ and $\idx{\sigmat}{BZZ'}$ are
  candidates for the optimization of the min-entropy. It remains to show the implication
  \begin{align}
  	\idx{\rhot}{ABZ} \leq 2^{-\lambda}\, \idA \kron \idx{\sigmat}{BZ} 
  	\implies \idx{\rhot}{ABZZ'} \leq 2^{-\lambda}\,
  	 \abs{\textnormal{Z}} \, \idA \kron \idx{\sigmat}{BZ} \kron \idi{Z'} 
	 \label{eqn:maxext}
  \end{align}
  which in turn implies~\eqref{eqn:rel}. However,~\eqref{eqn:maxext} follows from the 
  fact that, for any extension $\idx{X}{AB}$ of a positive operator $\idx{X}{A}$, it 
  holds that $\idx{X}{AB} \leq \abs{\textnormal{B}}\, \idx{X}{A} \kron 
  \idB$. Since $\idx{X}{AB}$ has a spectral decomposition with positive coefficients, it is sufficient to show this property for pure normalized states $\proji{\psi}{\psi}{AB}$. The general property then follows by taking the weighted sum on both sides of the inequality. Let $\tauA := \ptr{B}(\proji{\psi}{\psi}{AB})$ and $\idx{\Gamma}{AB} := (\tauA^{-\frac{1}{2}} \kron \idB) \proji{\psi}{\psi}{AB}\, (\tauA^{-\frac{1}{2}} \kron \idB)$, where the inverse is taken on the support of $\tauA$. Since $\idx{\Gamma}{AB}$ is of rank $1$, its maximum eigenvalue is $\tr (\idx{\Gamma}{AB}) = \textrm{rank} \{\tauA\} \leq \min \{ \abs{\textnormal{A}}, \abs{\textnormal{B}} \}$ and, thus, $\idx{\Gamma}{AB} \leq \abs{\textnormal{B}}\, \idAB$. Hence, by conjugation of both sides with $\tauA^{\frac{1}{2}}$ follows $\proji{\psi}{\psi}{AB} \leq \abs{\textnormal{B}}\, \tauA \kron \idB$.
  This concludes the proof.
\end{proof}

The following lemma, which shows that conditioning on an additional quantum system $C$ cannot decrease the conditional smooth min-entropy by more than $2 \log \abs{C}$, follows immediately from  Lemmas \ref{lem:chain-rule-c} and \ref{lem:data-processing-bound}  
\begin{lemma}\label{lem:chain-rule-q}
 $\chmineps{\eps}{A}{BC}{\rho}\geq \chmineps{\eps}{A}{B}{\rho}-2\log |C|.$
\end{lemma}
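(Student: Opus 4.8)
The plan is to reduce the claim to the two preceding lemmas by routing the quantum register $C$ through a dephasing channel, turning it into a classical copy $Z$ of the same dimension, and then chaining the two bounds. Concretely, I would fix an orthonormal basis $\{\ket{z}\}_z$ of $\hC$ with $\abs{Z}=\abs{C}$, let $\cM$ be the (rank-one) projective measurement in this basis, and set $\idx{\rho}{ABZ} := (\idx{\cI}{AB} \kron \cM)(\rhoABC)$.

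First I would apply Lemma~\ref{lem:data-processing-bound} to the state $\rhoABC$ with this choice of $\cM$, which yields
\begin{align*}
\chmineps{\eps}{A}{BC}{\rho} \geq \chmineps{\eps}{A}{BZ}{\rho} - \log \abs{C}\;.
\end{align*}
Then, since $\idx{\rho}{ABZ}$ is by construction classical on $Z$ in the basis $\{\ket{z}\}_z$, Lemma~\ref{lem:chain-rule-c} applies and gives
\begin{align*}
\chmineps{\eps}{A}{BZ}{\rho} \geq \chmineps{\eps}{A}{B}{\rho} - \log \abs{C}\;.
\end{align*}
Combining the two inequalities immediately produces the claimed bound $\chmineps{\eps}{A}{BC}{\rho}\geq \chmineps{\eps}{A}{B}{\rho}-2\log \abs{C}$.

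The argument is essentially bookkeeping, so there is no substantive obstacle; the only points that need care are (i) choosing the measurement to be rank-one so that $\abs{Z}=\abs{C}$ exactly, since a coarser measurement would only weaken the bound, and (ii) checking that the $\eps$-smoothing is handled consistently — in the first step the smoothing ball is taken around $\rhoABC$ and in the second around $\idx{\rho}{ABZ}$, but Lemma~\ref{lem:data-processing-bound} and Lemma~\ref{lem:chain-rule-c} are each stated with their smoothing parameters already matched to exactly these states, so the composition goes through verbatim. Note also that the marginal $\rhoB$ on which $\chmineps{\eps}{A}{B}{\rho}$ is evaluated is the same whether one traces out $C$ directly or first measures it and then discards $Z$, so all three entropies refer to marginals of one and the same underlying state.
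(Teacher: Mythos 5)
Your proof is correct and is precisely the argument the paper intends: it states that the lemma ``follows immediately from Lemmas~\ref{lem:chain-rule-c} and~\ref{lem:data-processing-bound}'', and your explicit chaining via a rank-one basis measurement $C \to Z$ with $\abs{Z}=\abs{C}$ is exactly that composition. Your added remarks on matching the smoothing states and on the invariance of the $AB$ marginal are the right points to check and they hold.
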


\subsection{B. Main Results}\label{sec:main}

\subsubsection{(Classical) Bit Commitment Resource}
\begin{theorem}\label{thm:class-imp}
Every quantum protocol which uses $n_A$ (classical) bit commitments
from Alice to Bob and $n_B$ (classical) bit commitments from Bob to
Alice with $n=n_A+n_B$ as a resource and implements an $\eps$-hiding
and $\Delta$-binding string commitment of length at most
\begin{align*}
 \ell \leq n-2\log\left(\frac{(1-\Delta)^2}{4}-\sqrt{2\eps}\right)-1.
\end{align*}
In particular, if $\Delta=\eps\leq 0.01$, then $\ell \leq n+6$.
\end{theorem}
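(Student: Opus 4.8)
The plan is to reduce the impossibility to the existence of a cheating strategy for Alice against a modified, resource-free protocol in which Bob himself simulates the bit-commitment functionality; since the modification only makes Bob more powerful, any such attack lifts to one against the original protocol. Whenever Alice would invoke one of her $n_A$ commitments, in the modification she measures the bit, keeps a copy, and sends it to a classical register $C_A$ held by Bob, who moves it to his main register $B$ only when that commitment is opened. Whenever Bob would feed a bit $y$ into one of his $n_B$ commitments, he instead applies the isometry $\ket{y}_Y \mapsto \ket{yy}_{YY'}$, stores $Y'$ in a register $C_B$, and measures $Y'$ (announcing the outcome) only at opening time; this keeps the joint state of Alice and Bob pure conditioned on all classical communication. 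Writing $C := C_A C_B$, so that $\log|C| \le n$, let $\rho_{XABC}$ be the state at the end of the commit phase when Alice's input $X$ is uniformly distributed; its marginal $\rho_{XAB}$ coincides with the state of the original commit phase.

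I would first extract a min-entropy bound from the hiding property. Because the protocol is $\eps$-hiding, $\rho_{XB}$ is $\eps$-close in trace distance to a state in which $X$ is uniform and product with $B$, and since the purified distance is at most $\sqrt{2\dis(\cdot,\cdot)}$, the definition of the smooth min-entropy gives $\chmineps{\epst}{X}{B}{\rho} \ge \log|X| = \ell$ with $\epst := \sqrt{2\eps}$. I would then transfer this bound to the whole register $C$: conditioning on the genuinely classical $C_A$ costs at most $\log|C_A| = n_A$ by Lemma~\ref{lem:chain-rule-c}, while $C_B$ holds only a coherent copy of classical data Bob already keeps in $B$, so a measurement of $C_B$ (Lemma~\ref{lem:data-processing-bound}) loses at most $\log|C_B| = n_B$ and discarding the now-redundant outcome is free. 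Hence $\chmineps{\epst}{X}{BC}{\rho} \ge \ell - n$.

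Next I would build the attack. By Lemma~\ref{lem:leftover} applied to $\rho_{XBC}$ there is a balanced hash function $f\colon \{0,1\}^\ell \to \{0,1\}$ with $\dis(\rho_{BC}^{f,0},\rho_{BC}^{f,1}) \le 2\delta$, where $\delta := \epst + \tfrac12\sqrt{2^{1-\chmineps{\epst}{X}{BC}{\rho}}}$ and $\rho_{BC}^{f,z}$ is the average of $\rho_{BC}^x$ over $x \in f^{-1}(z)$. Let Alice pick $z \in \{0,1\}$ and honestly commit to the uniform superposition of the strings in $f^{-1}(z)$, keeping a purifying copy; the resulting state on her system $A'$ together with $BC$ is pure conditioned on all classical communication and has $BC$-marginal $\rho_{BC}^{f,z}$. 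Lemma~\ref{lem:classical-attack} then produces a unitary on $A'$ transforming this state into one that is $2\sqrt{\delta}$-close, in trace distance, to the state Alice would obtain by committing to $f^{-1}(1-z)$. Since $f^{-1}(0)$ and $f^{-1}(1)$ are disjoint subsets of $\{0,1\}^\ell$, this is exactly the transformation ruled out by weak $\Delta$-binding, so $1-\Delta \le 2\sqrt{\delta}$.

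Finally I would solve for $\ell$. The bound $1-\Delta \le 2\sqrt{\delta}$ rearranges to $\tfrac{(1-\Delta)^2}{4} - \sqrt{2\eps} \le \tfrac12\sqrt{2^{1-\chmineps{\epst}{X}{BC}{\rho}}}$; squaring and taking logarithms gives $\chmineps{\epst}{X}{BC}{\rho} \le -1 - 2\log\!\big(\tfrac{(1-\Delta)^2}{4} - \sqrt{2\eps}\big)$, and combining with $\chmineps{\epst}{X}{BC}{\rho} \ge \ell - n$ yields the displayed inequality; substituting $\Delta = \eps \le 0.01$ into the monotone right-hand side gives $\ell \le n+6$. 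The step I expect to be the main obstacle is the bookkeeping in the second paragraph, namely that conditioning on the whole simulated resource $C$ costs only $\log|C| = n$ smooth min-entropy rather than $2\log|C| = 2n$: this is precisely what separates the classical-resource bound here from the quantum-resource bound of the next subsection, and it requires designing Bob's simulation so that $C_A$ is genuinely classical and $C_B$ carries nothing beyond a classical register Bob already holds, while still leaving the overall state pure so that Lemma~\ref{lem:classical-attack} remains applicable.
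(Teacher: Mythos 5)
Your proposal is correct and follows essentially the same route as the paper's proof: the same Bob-simulated, resource-free modification, the same chain of lemmas (hiding $\Rightarrow$ smooth min-entropy bound, chain rules for $C_A$ and $C_B$, leftover hashing, and the Uhlmann-type attack), and the same final rearrangement. You even make explicit the one delicate point the paper glosses over, namely that conditioning on $C=C_AC_B$ costs only $n$ rather than $2n$ because $C_A$ is classical and the measurement outcome of $C_B$ is redundant with a register Bob already holds in $B$.
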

\begin{proof}
Let $\vecstate{x}\kron\rhoABC^x$ be the state resulting from the execution of an $\eps$-hiding commitment protocol when the input of Alice is $x$. Then $\rho_{XABC}=\sum_x \frac{1}{2^\ell}\vecstate{x}\kron\rhoABC^x$ is the state resulting from an execution where the committed string $X$ is uniformly distributed. Let $\epst:=\sqrt{2\eps}$. Since $\rho_{XB}$ is $\eps$-close to uniform and $P(\rho,\rho')\leq \sqrt{2\dis(\rho,\rho')}$ \cite{TCR10}, the definition of the smooth min-entropy implies that
\[\chmineps{\epst}{X}{B}{\rho}\geq \log |X|=\ell.\]
In the following, we consider a modified protocol that does not use the resource bit commitments. In this modified protocol Alice, instead of using the resource bit commitments, measures the bits to be committed, stores a copy and sends them to Bob, who stores them in a classical register $C_A$. When one of these commitments is opened, he moves the coresponding bit to his register $B$. Bob simulates the action of his commitments locally as follows: instead of measuring a register, $Y$, and sending the outcome to the commitment functionality, he applies the isometry $U:\ket{y}_Y\mapsto \ket{yy}_{YY'}$ purifying the measurement of the committed bit and stores $Y'$ in register $C_B$. When Bob has to open the commitment, he measures $Y'$ and sends the outcome to Alice over the classical channel. Note that we make Bob more powerful in this modified protocol because he can simulate the original protocol locally. Thus, any successful attack of Alice against the modified protocol implies a successful attack against the original protocol. Since we only make use of the modified protocol to construct an attack against Bob, the modified protocol does not have to be hiding. Furthermore, the state conditioned on the classical communication is again pure. Let 
$\vecstate{x}\kron\rhob_{AB}^x$ be the state resulting from the execution of the modified protocol when the input of Alice is $x$.
Then $\rhob_{XAB}=\sum_x \frac{1}{2^\ell}\vecstate{x}\kron\rhob_{AB}^x$ is the state resulting from an execution where the committed string $X$ is uniformly distributed. 
From Lemma \ref{lem:leftover} we know that there exists a function $f:\{0,1\}^\ell\rightarrow \{0,1\}$ such that 
\[\dis(\rho^{\mX_0}_{BC},\rho^{\mX_1}_{BC})\leq 2\delta\]
where $\rho_{BC}^{\mX_z}=\frac{1}{2^{\ell-1}}\sum_{x\in f^{-1}(z)}\rho_{BC}^{x}$, $\delta:=\epst  + \frac{1}{2} \sqrt{2^{1-\chmineps{\epst}{X}{BC}{\rho} }}$ and $C$ stands for $C_AC_B$. Let $z \in \{0,1\}$  and let Alice prepare the state
\begin{align*}
 \frac{1}{\sqrt{2^{\ell-1}}}\sum_{x\in f^{-1}(z)}\ket{x}_X\kron\ket{x}_{X'}
\end{align*}
and honestly executes the commit protocol with the first half of this state as input. Let $\rho_{A'BC_AC_B}^{\mX_z}=\rho_{XX'ABC_AC_B}^{\mX_z}$ be the resulting joint state at the end of the commit phase. Then we have $\ptrace{A'}{\rho_{A'BC_AC_B}^{\mX_z}}=\rho_{BC_AC_B}^{\mX_z}$ and, therefore, Lemma \ref{lem:classical-attack} then implies that there exists unitary $U_A$ such that
\begin{align}\label{ineq:dist-upper-bound-2}
 \dis(\rhot_{A'BC_AC_B}^{\mX_z},\rho_{A'BC_AC_B}^{\mX_z})\leq 2\sqrt{\delta},
\end{align}
where $\rhot_{A'BC_AC_B}^{\mX_z}=(U_{A'}\kron \idB)\rho_{A'BC_AC_B}^{\mX_z}(U_{A'}\kron \idB)^\dagger$. 
Lemmas \ref{lem:chain-rule-c} and \ref{lem:data-processing-bound} imply that 
\begin{align}\label{ineq:entropy-bound}
\chmineps{\epst}{X}{BC_AC_B}{\rho}&\geq \chmineps{\epst}{X}{BC_B}{\rho}-n_A\nonumber\\
&\geq \chmineps{\epst}{X}{B}{\rho}-n\nonumber\\
&\geq \ell-n
\end{align}
Thus, we have 
\begin{align*} 
1-\Delta &\leq 2\sqrt{\delta} = 2\sqrt{\epst  + \frac{1}{2} \sqrt{2^{1-\chmineps{\epst}{X}{BC_AC_B}{\rho}}}}\\
&\leq 2\sqrt{\epst + \frac{1}{2} \sqrt{2^{1-\ell+n}}}\\
&\leq 2\sqrt{\sqrt{2\eps}  +  2^{-\frac{1}{2}(\ell-n+1)}}
\end{align*}
where we used the definition of weakly $\Delta$-binding and inequalities~\eqref{ineq:dist-upper-bound-2} and~\eqref{ineq:entropy-bound}.
\end{proof}

\subsubsection{Quantum Resource}

Next, we consider implementations of string commitments from a functionality which allows the players to commit to (and later reveal) $n$ qubit states. The following theorem shows that there cannot exist a protocol using such a resource which implements an arbitrarily hiding and binding string commitment of length larger than $2n$.
\begin{theorem}\label{thm:quant-imp}
Every quantum protocol which uses a resource, which allows the players to commit to (and later reveal) $n$ qubit states and implements an $\eps$-hiding and $\Delta$-binding string commitment of length $\ell$ must have
\begin{align}\label{ineq:quantum-bound2}
 \ell \leq 2n-2\log\left(\frac{(1-\Delta)^2}{4}-\sqrt{2\eps}\right)-1.
\end{align} 
In particular, if $\Delta=\eps\leq 0.01$, then $\ell \leq 2n+6$.
\end{theorem}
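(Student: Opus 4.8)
The plan is to mimic the proof of Theorem~\ref{thm:class-imp} almost verbatim, replacing the two classical resource registers $C_A,C_B$ by a single quantum register $C$ of dimension $2^n$ that Bob uses to hold all the qubits associated with the commitment functionality. First I would fix the notation: let $\vecstate{x}\kron\rhoABC^x$ be the state at the end of the commit phase of the $\eps$-hiding protocol when Alice inputs $x$, and let $\rho_{XABC}=\frac{1}{2^\ell}\sum_x\vecstate{x}\kron\rhoABC^x$ be the state when $X$ is uniform. As in the classical case, the weak $\eps$-hiding property together with $P(\rho,\rho')\leq\sqrt{2\dis(\rho,\rho')}$~\cite{TCR10} and the definition of the smooth min-entropy give $\chmineps{\epst}{X}{B}{\rho}\geq\ell$ with $\epst:=\sqrt{2\eps}$.

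Next I would describe the modified protocol. Alice, instead of invoking the resource, measures each qubit she would commit and sends the committed systems to Bob, who stores them (and everything Bob would himself feed into the functionality) in a quantum register $C$ of dimension at most $2^n$. As before, Bob is strictly more powerful in the modified protocol since he can locally simulate the original one, so any successful attack against the modified protocol yields one against the original; and the modified protocol need not be hiding since it is used only to build an attack on Bob. Crucially, the state at the end of the commit phase, conditioned on all classical communication, is pure. The one structural difference from the classical proof is the min-entropy loss: where the classical proof used Lemmas~\ref{lem:chain-rule-c} and~\ref{lem:data-processing-bound} twice (once per classical register) to lose $n_A+n_B=n$ bits, here I invoke Lemma~\ref{lem:chain-rule-q} once to conclude
\begin{align*}
\chmineps{\epst}{X}{BC}{\rho}\geq\chmineps{\epst}{X}{B}{\rho}-2\log|C|\geq\ell-2n\,,
\end{align*}
which is inequality~\eqref{ineq:ent-lower-bound2}.

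The rest is identical to the classical argument. By Lemma~\ref{lem:leftover} there is a balanced $f:\{0,1\}^\ell\to\{0,1\}$ with $\dis(\rho^{\mX_0}_{BC},\rho^{\mX_1}_{BC})\leq 2\delta$, where $\delta:=\epst+\frac12\sqrt{2^{1-\chmineps{\epst}{X}{BC}{\rho}}}$ and $\rho_{BC}^{\mX_z}=\frac{1}{2^{\ell-1}}\sum_{x\in f^{-1}(z)}\rho_{BC}^{x}$. Alice then mounts the attack: she prepares $\frac{1}{\sqrt{2^{\ell-1}}}\sum_{x\in f^{-1}(z)}\ket{x}_X\kron\ket{x}_{X'}$, honestly runs the commit phase on the first half, and keeps $X'$; the resulting state $\rho_{A'BC}^{\mX_z}$ is pure conditioned on all shared classical information, so Lemma~\ref{lem:classical-attack} produces a unitary $U_{A'}$ with $\dis(U_{A'}\rho_{A'BC}^{\mX_z}U_{A'}^\dagger,\rho_{A'BC}^{\mX_{1-z}})\leq 2\sqrt{\delta}$. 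The weak $\Delta$-binding definition then forces $1-\Delta\leq 2\sqrt{\delta}=2\sqrt{\sqrt{2\eps}+\frac12\sqrt{2^{1-\chmineps{\epst}{X}{BC}{\rho}}}}\leq 2\sqrt{\sqrt{2\eps}+2^{-\frac12(\ell-2n+1)}}$, and solving for $\ell$ gives~\eqref{ineq:quantum-bound2}. Setting $\Delta=\eps\leq 0.01$ makes $(1-\Delta)^2/4-\sqrt{2\eps}>0$ and yields $\ell\leq 2n+6$.

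I expect no genuine obstacle: the only place requiring care is the bookkeeping that Bob's simulation register really has dimension $\le 2^n$ (each committed qubit contributes a factor of $2$, and both Alice-to-Bob and Bob-to-Bob committed qubits live in $C$, for a total of $n$ qubits), so that $2\log|C|\le 2n$. A secondary point worth a sentence is the closing remark the Letter makes: the argument goes through for \emph{any} resource Bob can simulate with an ancilla of size $\le\log|C|$ keeping the post-commit state pure conditioned on the classical transcript, giving $\ell\le 2\log|C|-2\log\!\big(\tfrac{(1-\Delta)^2}{4}-\sqrt{2\eps}\big)-1$; I would note this generalization immediately after the main bound.
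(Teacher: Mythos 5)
Your proposal is correct and follows the paper's own proof essentially verbatim: the same hiding-based lower bound $\chmineps{\epst}{X}{B}{\rho}\geq\ell$, the same application of Lemma~\ref{lem:chain-rule-q} to lose $2\log|C|\leq 2n$, the same use of Lemmas~\ref{lem:leftover} and~\ref{lem:classical-attack} to build the attack, and the same closing remark about general resources simulable by Bob with a $\log|C|$-sized ancilla. One cosmetic slip: in the modified protocol Alice should simply \emph{send} the committed qubits to Bob rather than ``measure'' them (that phrasing is a leftover from the classical case), but this does not affect the argument since your min-entropy accounting correctly treats $C$ as a quantum register.
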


\begin{proof}
Let $\vecstate{x}\kron\rhoABC^x$ be the state resulting from the execution of an $\eps$-hiding commitment protocol when the input of Alice is $x$. Then $\rho_{XABC}=\sum_x \frac{1}{2^\ell}\vecstate{x}\kron\rhoABC^x$ is the state resulting from an execution where the committed string $X$ is uniformly distributed. Let $\epst:=\sqrt{2\eps}$. Since $\rho_{XB}$ is $\eps$-close to uniform and $P(\rho,\rho')\leq \sqrt{2\dis(\rho,\rho')}$ \cite{TCR10}, the definition of the smooth min-entropy implies that
\[\chmineps{\epst}{X}{B}{\rho}\geq \log |X|=\ell.\]
From Lemma \ref{lem:chain-rule-q} we have 
\begin{align}\label{ineq:ent-lower-bound-a}
\chmineps{\epst}{X}{BC}{\rho}\geq \chmineps{\epst}{X}{B}{\rho}-2\log |C|.
\end{align}

From Lemma \ref{lem:leftover} we know that there exists a function $f:\{0,1\}^\ell\rightarrow \{0,1\}$ such that 
\[\dis(\rho^{\mX_0}_{BC},\rho^{\mX_1}_{BC})\leq 2\delta\]
where $\rho_{BC}^{\mX_z}=\frac{1}{2^{\ell-1}}\sum_{x\in f^{-1}(z)}\rho_{BC}^{x}$ and $\delta:=\epst  + \frac{1}{2} \sqrt{2^{1-\chmineps{\epst}{X}{BC}{\rho} }}$.
Let $z \in \{0,1\}$ and let Alice prepare the state
\begin{align*}
 \frac{1}{\sqrt{2^{\ell-1}}}\sum_{x\in f^{-1}(z)}\ket{x}_{X}\kron\ket{x}_{X'}
\end{align*}
and honestly execute the commit protocol with the first half of this state as input. Let $\rho_{A'BC}^{\mX_z}=\rho_{XX'ABC}^{\mX_z}$ be the resulting state. Then we have $\ptrace{A'}{\rho_{A'BC}^{\mX_z}}=\rho_{BC}^{\mX_z}$ and, therefore, Lemma \ref{lem:classical-attack} implies that there exists a unitary $U_{A'}$ such that 
\begin{align}\label{ineq:dist-upper-bound}
\dis(\rhot_{A'BC}^{\mX_{1-z}},\rho_{A'BC}^{\mX_{1-z}})\leq 2\sqrt{\delta}
\end{align}
where $\rhot_{A'BC}^{\mX_{1-z}}=(U_{A'}\kron \id_{BC})\rho_{A'BC}^{\mX_{z}}(U_{A}\kron \id_{BC})^\dagger$.
This implies that
\begin{align*} 
1-\Delta &\leq 2\sqrt{\delta} = 2\sqrt{\epst  + \frac{1}{2} \sqrt{2^{1-\chmineps{\epst}{X}{BC}{\rho}}}}\\
&\leq 2\sqrt{\epst + \frac{1}{2} \sqrt{2^{1-\ell+2n}}}\\
&\leq 2\sqrt{\sqrt{2\eps}  +  2^{-\frac{1}{2}(\ell-2n+1)}}\\
\end{align*}
where we used the definition of weakly $\Delta$-binding and inequalities~\eqref{ineq:ent-lower-bound-a} and~\eqref{ineq:dist-upper-bound}.
\end{proof}
Note that the proof of Theorem~\ref{thm:quant-imp} only uses the fact that the resource could be simulated by Bob such that the resulting state at the end of the commit phase is pure conditioned on all the classical communication and the simulated resource uses an additional memory of size at most $\log \abs{C}$. Thus, inequality \eqref{ineq:quantum-bound2} holds for arbitrary such resources with $\log |C|\leq n$. A simple example of such a resource would be a functionality which generates a tripartite state $\ket{\phi}_{ABC}$ and gives system~$A$ to Alice and $B$ to Bob.

\end{widetext}

\begin{thebibliography}{10}%
\makeatletter
\providecommand \@ifxundefined [1]{%
 \ifx #1\undefined \expandafter \@firstoftwo
 \else \expandafter \@secondoftwo
\fi
}%
\providecommand \@ifnum [1]{%
 \ifnum #1\expandafter \@firstoftwo
 \else \expandafter \@secondoftwo
\fi
}%
\providecommand \enquote [1]{``#1''}%
\providecommand \bibnamefont  [1]{#1}%
\providecommand \bibfnamefont [1]{#1}%
\providecommand \citenamefont [1]{#1}%
\providecommand\href[0]{\@sanitize\@href}%
\providecommand\@href[1]{\endgroup\@@startlink{#1}\endgroup\@@href}%
\providecommand\@@href[1]{#1\@@endlink}%
\providecommand \@sanitize [0]{\begingroup\catcode`\&12\catcode`\#12\relax}%
\@ifxundefined \pdfoutput {\@firstoftwo}{%
 \@ifnum{\z@=\pdfoutput}{\@firstoftwo}{\@secondoftwo}%
}{%
 \providecommand\@@startlink[1]{\leavevmode}%
 \providecommand\@@endlink[0]{}%
}{%
 \providecommand\@@startlink[1]{%
  \leavevmode
  \pdfstartlink
   attr{/Border[0 0 1 ]/H/I/C[0 1 1]}%
   user{/Subtype/Link/A<</Type/Action/S/URI/URI(#1)>>}%
  \relax
 }%
 \providecommand\@@endlink[0]{\pdfendlink}%
}%
\providecommand \url  [0]{\begingroup\@sanitize \@url }%
\providecommand \@url [1]{\endgroup\@href {#1}{\urlprefix}}%
\providecommand \urlprefix [0]{URL }%
\providecommand \Eprint[0]{\href }%
\@ifxundefined \urlstyle {%
  \providecommand \doi [1]{doi:\discretionary{}{}{}#1}%
}{%
  \providecommand \doi [0]{doi:\discretionary{}{}{}\begingroup
  \urlstyle{rm}\Url }%
}%
\providecommand \doibase [0]{http://dx.doi.org/}%
\providecommand \Doi[1]{\href{\doibase#1}}%
\providecommand \bibAnnote [3]{%
  \BibitemShut{#1}%
  \begin{quotation}\noindent
    \textsc{Key:}\ #2\\\textsc{Annotation:}\ #3%
  \end{quotation}%
}%
\providecommand \bibAnnoteFile [2]{%
  \IfFileExists{#2}{\bibAnnote {#1} {#2} {\input{#2}}}{}%
}%
\providecommand \typeout [0]{\immediate \write \m@ne }%
\providecommand \selectlanguage [0]{\@gobble}%
\providecommand \bibinfo [0]{\@secondoftwo}%
\providecommand \bibfield [0]{\@secondoftwo}%
\providecommand \translation [1]{[#1]}%
\providecommand \BibitemOpen[0]{}%
\providecommand \bibitemStop [0]{}%
\providecommand \bibitemNoStop [0]{.\EOS\space}%
\providecommand \EOS [0]{\spacefactor3000\relax}%
\providecommand \BibitemShut [1]{\csname bibitem#1\endcsname}%
\bibitem{BB84}%
  \BibitemOpen
  \bibfield{author}{%
  \bibinfo {author} {\bibfnamefont{C.~H.}\ \bibnamefont{Bennett}}\ and\
  \bibinfo {author} {\bibfnamefont{G.}~\bibnamefont{Brassard}},\ }%
  in\ \emph{\bibinfo {booktitle} {Proc. IEEE Int. Conf. on Comp., Sys. and
  Signal Process.}}\ (\bibinfo {publisher} {IEEE},\ \bibinfo {address}
  {Bangalore},\ \bibinfo {year} {1984})\ pp.\ \bibinfo {pages} {175--179}%
  \bibAnnoteFile{NoStop}{BB84}%
\bibitem{E91}%
  \BibitemOpen
  \bibfield{author}{%
  \bibinfo {author} {\bibfnamefont{A.~K.}\ \bibnamefont{Ekert}},\ }%
  \bibfield{journal}{%
  \bibinfo {journal} {Phys. Rev. Lett.}\ }%
  \textbf{\bibinfo {volume} {67}},\ \bibinfo {pages} {661} (\bibinfo {month}
  {Aug}\ \bibinfo {year} {1991})%
  \bibAnnoteFile{NoStop}{E91}%
\bibitem{Note1}%
  \BibitemOpen
  \bibinfo {note} {An explicit calculation that shows that a constant-length
  initial key is sufficient to generate arbitrarily many novel key bits is
  given, for example, in~\cite {muller09}.}%
  \bibAnnoteFile{Stop}{Note1}%
\bibitem{Blum82}%
  \BibitemOpen
  \bibfield{author}{%
  \bibinfo {author} {\bibfnamefont{M.}~\bibnamefont{Blum}},\ }%
  \bibfield{journal}{%
  \bibinfo {journal} {SIGACT News}\ }%
  \textbf{\bibinfo {volume} {15}},\ \bibinfo {pages} {23} (\bibinfo {year}
  {1983})%
  \bibAnnoteFile{NoStop}{Blum82}%
\bibitem{HMU06}%
  \BibitemOpen
  \bibfield{author}{%
  \bibinfo {author} {\bibfnamefont{D.}~\bibnamefont{Hofheinz}}, \bibinfo
  {author} {\bibfnamefont{J.}~\bibnamefont{M{\"u}ller-Quade}},\ and\ \bibinfo
  {author} {\bibfnamefont{D.}~\bibnamefont{Unruh}},\ }%
  in\ \emph{\bibinfo {booktitle} {EUROCRYPT}},\ \bibinfo {series} {Lecture
  Notes in Computer Science}, Vol.\ \bibinfo {volume} {4004},\ \bibinfo
  {editor} {edited by\ \bibinfo {editor}
  {\bibfnamefont{S.}~\bibnamefont{Vaudenay}}}\ (\bibinfo {publisher}
  {Springer},\ \bibinfo {year} {2006})\ pp.\ \bibinfo {pages} {504--521}%
  \bibAnnoteFile{NoStop}{HMU06}%
\bibitem{Mayers97}%
  \BibitemOpen
  \bibfield{author}{%
  \bibinfo {author} {\bibfnamefont{D.}~\bibnamefont{Mayers}},\ }%
  \bibfield{journal}{%
  \bibinfo {journal} {Physical Review Letters}\ }%
  \textbf{\bibinfo {volume} {78}},\ \bibinfo {pages} {3414} (\bibinfo {year}
  {1997})%
  \bibAnnoteFile{NoStop}{Mayers97}%
\bibitem{LoChau97}%
  \BibitemOpen
  \bibfield{author}{%
  \bibinfo {author} {\bibfnamefont{H.~K.}\ \bibnamefont{Lo}}\ and\ \bibinfo
  {author} {\bibfnamefont{H.~F.}\ \bibnamefont{Chau}},\ }%
  \bibfield{journal}{%
  \bibinfo {journal} {Physical Review Letters}\ }%
  \textbf{\bibinfo {volume} {78}},\ \bibinfo {pages} {3410} (\bibinfo {year}
  {1997})%
  \bibAnnoteFile{NoStop}{LoChau97}%
\bibitem{DKSW06}%
  \BibitemOpen
  \bibfield{author}{%
  \bibinfo {author} {\bibfnamefont{G.~M.}\ \bibnamefont{D'Ariano}}, \bibinfo
  {author} {\bibfnamefont{D.}~\bibnamefont{Kretschmann}}, \bibinfo {author}
  {\bibfnamefont{D.}~\bibnamefont{Schlingemann}},\ and\ \bibinfo {author}
  {\bibfnamefont{R.~F.}\ \bibnamefont{Werner}},\ }%
  \bibfield{journal}{%
  \Doi{10.1103/PhysRevA.76.032328}{\bibinfo {journal} {Phys. Rev. A}}\ }%
  \textbf{\bibinfo {volume} {76}},\ \bibinfo {pages} {032328} (\bibinfo {month}
  {Sep}\ \bibinfo {year} {2007})%
  \bibAnnoteFile{NoStop}{DKSW06}%
\bibitem{CDPSW09}%
  \BibitemOpen
  \bibfield{author}{%
  \bibinfo {author} {\bibfnamefont{G.}~\bibnamefont{{Chiribella}}}, \bibinfo
  {author} {\bibfnamefont{G.~M.}\ \bibnamefont{{D'Ariano}}}, \bibinfo {author}
  {\bibfnamefont{P.}~\bibnamefont{{Perinotti}}}, \bibinfo {author}
  {\bibfnamefont{D.~M.}\ \bibnamefont{{Schlingemann}}},\ and\ \bibinfo {author}
  {\bibfnamefont{R.~F.}\ \bibnamefont{{Werner}}},\ }%
  \bibfield{journal}{%
  \bibinfo {journal} {ArXiv e-prints}}%
   (\bibinfo {month} {May}\ \bibinfo {year} {2009}),\
  \Eprint{http://arxiv.org/abs/0905.3801}{arXiv:0905.3801 [quant-ph]}%
  \bibAnnoteFile{NoStop}{CDPSW09}%
\bibitem{GMR85}%
  \BibitemOpen
  \bibfield{author}{%
  \bibinfo {author} {\bibfnamefont{S.}~\bibnamefont{Goldwasser}}, \bibinfo
  {author} {\bibfnamefont{S.}~\bibnamefont{Micali}},\ and\ \bibinfo {author}
  {\bibfnamefont{C.}~\bibnamefont{Rackoff}},\ }%
  in\ \emph{\bibinfo {booktitle} {STOC}}\ (\bibinfo {publisher} {ACM},\
  \bibinfo {year} {1985})\ pp.\ \bibinfo {pages} {291--304}%
  \bibAnnoteFile{NoStop}{GMR85}%
\bibitem{BBCS92}%
  \BibitemOpen
  \bibfield{author}{%
  \bibinfo {author} {\bibfnamefont{C.~H.}\ \bibnamefont{Bennett}}, \bibinfo
  {author} {\bibfnamefont{G.}~\bibnamefont{Brassard}}, \bibinfo {author}
  {\bibfnamefont{C.}~\bibnamefont{Cr{\'e}peau}},\ and\ \bibinfo {author}
  {\bibfnamefont{H.}~\bibnamefont{Skubiszewska}},\ }%
  in\ \emph{\bibinfo {booktitle} {Advances in Cryptology --- CRYPTO '91}},\
  \bibinfo {series} {Lecture Notes in Computer Science}, Vol.\ \bibinfo
  {volume} {576}\ (\bibinfo {publisher} {Springer},\ \bibinfo {year} {1992})\
  pp.\ \bibinfo {pages} {351--366}%
  \bibAnnoteFile{NoStop}{BBCS92}%
\bibitem{DFLSS09}%
  \BibitemOpen
  \bibfield{author}{%
  \bibinfo {author} {\bibfnamefont{I.}~\bibnamefont{Damg{\aa}rd}}, \bibinfo
  {author} {\bibfnamefont{S.}~\bibnamefont{Fehr}}, \bibinfo {author}
  {\bibfnamefont{C.}~\bibnamefont{Lunemann}}, \bibinfo {author}
  {\bibfnamefont{L.}~\bibnamefont{Salvail}},\ and\ \bibinfo {author}
  {\bibfnamefont{C.}~\bibnamefont{Schaffner}},\ }%
  in\ \emph{\bibinfo {booktitle} {CRYPTO}},\ \bibinfo {series} {Lecture Notes
  in Computer Science}, Vol.\ \bibinfo {volume} {5677},\ \bibinfo {editor}
  {edited by\ \bibinfo {editor} {\bibfnamefont{S.}~\bibnamefont{Halevi}}}\
  (\bibinfo {publisher} {Springer},\ \bibinfo {year} {2009})\ pp.\ \bibinfo
  {pages} {408--427}%
  \bibAnnoteFile{NoStop}{DFLSS09}%
\bibitem{Unruh10}%
  \BibitemOpen
  \bibfield{author}{%
  \bibinfo {author} {\bibfnamefont{D.}~\bibnamefont{Unruh}},\ }%
  in\ \emph{\bibinfo {booktitle} {EUROCRYPT}},\ \bibinfo {series} {Lecture
  Notes in Computer Science}, Vol.\ \bibinfo {volume} {6110},\ \bibinfo
  {editor} {edited by\ \bibinfo {editor}
  {\bibfnamefont{H.}~\bibnamefont{Gilbert}}}\ (\bibinfo {publisher}
  {Springer},\ \bibinfo {year} {2010})\ pp.\ \bibinfo {pages} {486--505}%
  \bibAnnoteFile{NoStop}{Unruh10}%
\bibitem{Kilian88}%
  \BibitemOpen
  \bibfield{author}{%
  \bibinfo {author} {\bibfnamefont{J.}~\bibnamefont{Kilian}},\ }%
  in\ \emph{\bibinfo {booktitle} {Proceedings of the 20th Annual ACM Symposium
  on Theory of Computing (STOC~'88)}}\ (\bibinfo {publisher} {ACM Press},\
  \bibinfo {year} {1988})\ pp.\ \bibinfo {pages} {20--31}%
  \bibAnnoteFile{NoStop}{Kilian88}%
\bibitem{WW10}%
  \BibitemOpen
  \bibfield{author}{%
  \bibinfo {author} {\bibfnamefont{S.}~\bibnamefont{Winkler}}\ and\ \bibinfo
  {author} {\bibfnamefont{J.}~\bibnamefont{Wullschleger}},\ }%
  in\ \emph{\bibinfo {booktitle} {CRYPTO}},\ \bibinfo {series} {Lecture Notes
  in Computer Science}, Vol.\ \bibinfo {volume} {6223},\ \bibinfo {editor}
  {edited by\ \bibinfo {editor} {\bibfnamefont{T.}~\bibnamefont{Rabin}}}\
  (\bibinfo {publisher} {Springer},\ \bibinfo {year} {2010})\ pp.\ \bibinfo
  {pages} {707--723}%
  \bibAnnoteFile{NoStop}{WW10}%
\bibitem{Note2}%
  \BibitemOpen
  \bibinfo {note} {Using the equivalence of oblivious transfer and commitments,
  the result of~\cite {WW10} implies that there exists no \protect \emph
  {composable} protocol that implements $(m+1)$ \protect \emph {individual} bit
  commitments using $m$ bit commitments as a resource, if one demands that the
  error decreases exponentially in $m$.}%
  \bibAnnoteFile{Stop}{Note2}%
\bibitem{SR01}%
  \BibitemOpen
  \bibfield{author}{%
  \bibinfo {author} {\bibfnamefont{R.~W.}\ \bibnamefont{Spekkens}}\ and\
  \bibinfo {author} {\bibfnamefont{T.}~\bibnamefont{Rudolph}},\ }%
  \bibfield{journal}{%
  \Doi{10.1103/PhysRevA.65.012310}{\bibinfo {journal} {Phys. Rev. A}}\ }%
  \textbf{\bibinfo {volume} {65}},\ \bibinfo {pages} {012310} (\bibinfo {month}
  {Dec}\ \bibinfo {year} {2001})%
  \bibAnnoteFile{NoStop}{SR01}%
\bibitem{BCHLW06}%
  \BibitemOpen
  \bibfield{author}{%
  \bibinfo {author} {\bibfnamefont{H.}~\bibnamefont{Buhrman}}, \bibinfo
  {author} {\bibfnamefont{M.}~\bibnamefont{Christandl}}, \bibinfo {author}
  {\bibfnamefont{P.}~\bibnamefont{Hayden}}, \bibinfo {author}
  {\bibfnamefont{H.-K.}\ \bibnamefont{Lo}},\ and\ \bibinfo {author}
  {\bibfnamefont{S.}~\bibnamefont{Wehner}},\ }%
  \bibfield{journal}{%
  \Doi{10.1103/PhysRevLett.97.250501}{\bibinfo {journal} {Phys. Rev. Lett.}}\
  }%
  \textbf{\bibinfo {volume} {97}},\ \bibinfo {pages} {250501} (\bibinfo {month}
  {Dec}\ \bibinfo {year} {2006})%
  \bibAnnoteFile{NoStop}{BCHLW06}%
\bibitem{CK11}%
  \BibitemOpen
  \bibfield{author}{%
  \bibinfo {author} {\bibfnamefont{A.}~\bibnamefont{{Chailloux}}}\ and\
  \bibinfo {author} {\bibfnamefont{I.}~\bibnamefont{{Kerenidis}}},\ }%
  \bibfield{journal}{%
  \bibinfo {journal} {ArXiv e-prints}}%
   (\bibinfo {month} {Feb.}\ \bibinfo {year} {2011}),\
  \Eprint{http://arxiv.org/abs/1102.1678}{arXiv:1102.1678 [quant-ph]}%
  \bibAnnoteFile{NoStop}{CK11}%
\bibitem{Kent11}%
  \BibitemOpen
  \bibfield{author}{%
  \bibinfo {author} {\bibfnamefont{A.}~\bibnamefont{Kent}},\ }%
  \enquote{\bibinfo {title} {Unconditionally secure bit commitment with flying
  qudits},}\  (\bibinfo {year} {2011}),\
  \Eprint{http://arxiv.org/abs/arXiv:1101.4620}{arXiv:1101.4620}%
  \bibAnnoteFile{NoStop}{Kent11}%
\bibitem{Crepea97}%
  \BibitemOpen
  \bibfield{author}{%
  \bibinfo {author} {\bibfnamefont{C.}~\bibnamefont{Cr{\'e}peau}},\ }%
  in\ \emph{\bibinfo {booktitle} {Advances in Cryptology --- CRYPTO '97}},\
  \bibinfo {series} {Lecture Notes in Computer Science}, Vol.\ \bibinfo
  {volume} {1233}\ (\bibinfo {publisher} {Springer},\ \bibinfo {year} {1997})\
  pp.\ \bibinfo {pages} {306--317}%
  \bibAnnoteFile{NoStop}{Crepea97}%
\bibitem{IMNW04}%
  \BibitemOpen
  \bibfield{author}{%
  \bibinfo {author} {\bibfnamefont{H.}~\bibnamefont{Imai}}, \bibinfo {author}
  {\bibfnamefont{J.}~\bibnamefont{M{\"u}ller-Quade}}, \bibinfo {author}
  {\bibfnamefont{A.}~\bibnamefont{Nascimento}},\ and\ \bibinfo {author}
  {\bibfnamefont{A.}~\bibnamefont{Winter}},\ }%
  in\ \emph{\bibinfo {booktitle} {Proceedings of the IEEE International
  Symposium on Information Theory (ISIT~'04)}}\ (\bibinfo {year} {2004})%
  \bibAnnoteFile{NoStop}{IMNW04}%
\bibitem{WolWul04}%
  \BibitemOpen
  \bibfield{author}{%
  \bibinfo {author} {\bibfnamefont{S.}~\bibnamefont{Wolf}}\ and\ \bibinfo
  {author} {\bibfnamefont{J.}~\bibnamefont{Wullschleger}},\ }%
  in\ \emph{\bibinfo {booktitle} {Proceedings of 2004 IEEE Information Theory
  Workshop (ITW~'04)}}\ (\bibinfo {year} {2004})%
  \bibAnnoteFile{NoStop}{WolWul04}%
\bibitem{Note3}%
  \BibitemOpen
  \bibinfo {note} {This assumption is not justified in the relativistic setting
  considered in~\cite {Kent11}.}%
  \bibAnnoteFile{Stop}{Note3}%
\bibitem{Note4}%
  \BibitemOpen
  \bibinfo {note} {In particular, one would have to consider arbitrary
  malicious strategies of dishonest parties to prove the security of a
  protocol.}%
  \bibAnnoteFile{Stop}{Note4}%
\bibitem{DMS00}%
  \BibitemOpen
  \bibfield{author}{%
  \bibinfo {author} {\bibfnamefont{P.}~\bibnamefont{Dumais}}, \bibinfo {author}
  {\bibfnamefont{D.}~\bibnamefont{Mayers}},\ and\ \bibinfo {author}
  {\bibfnamefont{L.}~\bibnamefont{Salvail}},\ }%
  in\ \emph{\bibinfo {booktitle} {EUROCRYPT}},\ \bibinfo {series} {Lecture
  Notes in Computer Science}, Vol.\ \bibinfo {volume} {1807},\ \bibinfo
  {editor} {edited by\ \bibinfo {editor}
  {\bibfnamefont{B.}~\bibnamefont{Preneel}}}\ (\bibinfo {publisher} {LNCS},\
  \bibinfo {year} {2000})\ pp.\ \bibinfo {pages} {300--315}%
  \bibAnnoteFile{NoStop}{DMS00}%
\bibitem{Note5}%
  \BibitemOpen
  \bibinfo {note} {See EPAPS Document No.[number will be inserted by
  publisher].}%
  \bibAnnoteFile{Stop}{Note5}%
\bibitem{KRS09}%
  \BibitemOpen
  \bibfield{author}{%
  \bibinfo {author} {\bibfnamefont{R.}~\bibnamefont{K\"onig}}, \bibinfo
  {author} {\bibfnamefont{R.}~\bibnamefont{Renner}},\ and\ \bibinfo {author}
  {\bibfnamefont{C.}~\bibnamefont{Schaffner}},\ }%
  \bibfield{journal}{%
  \Doi{10.1109/TIT.2009.2025545}{\bibinfo {journal} {Information Theory, IEEE
  Transactions on}}\ }%
  \textbf{\bibinfo {volume} {55}},\ \bibinfo {pages} {4337 } (\bibinfo {month}
  {sept.}\ \bibinfo {year} {2009}),\ ISSN \bibinfo {issn} {0018-9448}%
  \bibAnnoteFile{NoStop}{KRS09}%
\bibitem{TCR10}%
  \BibitemOpen
  \bibfield{author}{%
  \bibinfo {author} {\bibfnamefont{M.}~\bibnamefont{Tomamichel}}, \bibinfo
  {author} {\bibfnamefont{R.}~\bibnamefont{Colbeck}},\ and\ \bibinfo {author}
  {\bibfnamefont{R.}~\bibnamefont{Renner}},\ }%
  \bibfield{journal}{%
  \bibinfo {journal} {IEEE Transactions on Information Theory}\ }%
  \textbf{\bibinfo {volume} {56}},\ \bibinfo {pages} {4674 } (\bibinfo {year}
  {2010})%
  \bibAnnoteFile{NoStop}{TCR10}%
\bibitem{Renner2005}%
  \BibitemOpen
  \bibfield{author}{%
  \bibinfo {author} {\bibfnamefont{R.}~\bibnamefont{Renner}},\ }%
  \bibinfo {journal} {Ph.D. thesis, ETH Zurich, (2005), arXiv:
  quant-ph/0512258}%
  \bibAnnoteFile{NoStop}{Renner2005}%
\bibitem{Tom2010}%
  \BibitemOpen
\bibfield{journal}{%
    }%
  \bibfield{author}{%
  \bibinfo {author} {\bibfnamefont{M.}~\bibnamefont{Tomamichel}}, \bibinfo
  {author} {\bibfnamefont{R.}~\bibnamefont{Renner}}, \bibinfo {author}
  {\bibfnamefont{C.}~\bibnamefont{Schaffner}},\ and\ \bibinfo {author}
  {\bibfnamefont{A.}~\bibnamefont{Smith}},\ }%
  in\ \emph{\bibinfo {booktitle} {Information Theory Proceedings (ISIT), 2010
  IEEE International Symposium on}}\ (\bibinfo {year} {2010})\ pp.\ \bibinfo
  {pages} {2703 --2707}%
  \bibAnnoteFile{NoStop}{Tom2010}%
\bibitem{Lo97}%
  \BibitemOpen
  \bibfield{author}{%
  \bibinfo {author} {\bibfnamefont{H.~K.}\ \bibnamefont{Lo}},\ }%
  \bibfield{journal}{%
  \bibinfo {journal} {Physical Review A}\ }%
  \textbf{\bibinfo {volume} {56}},\ \bibinfo {pages} {1154} (\bibinfo {year}
  {1997})%
  \bibAnnoteFile{NoStop}{Lo97}%
\bibitem{BW92}%
  \BibitemOpen
  \bibfield{author}{%
  \bibinfo {author} {\bibfnamefont{C.~H.}\ \bibnamefont{Bennett}}\ and\
  \bibinfo {author} {\bibfnamefont{S.~J.}\ \bibnamefont{Wiesner}},\ }%
  \bibfield{journal}{%
  \Doi{10.1103/PhysRevLett.69.2881}{\bibinfo {journal} {Phys. Rev. Lett.}}\ }%
  \textbf{\bibinfo {volume} {69}},\ \bibinfo {pages} {2881} (\bibinfo {month}
  {Nov}\ \bibinfo {year} {1992})%
  \bibAnnoteFile{NoStop}{BW92}%
\bibitem{muller09}%
  \BibitemOpen
  \bibfield{author}{%
  \bibinfo {author} {\bibfnamefont{J.}~\bibnamefont{M\"{u}ller-Quade}}\ and\
  \bibinfo {author} {\bibfnamefont{R.}~\bibnamefont{Renner}},\ }%
  \bibfield{journal}{%
  \bibinfo {journal} {New J. Phys.}\ }%
  \textbf{\bibinfo {volume} {11}},\ \bibinfo {pages} {085006} (\bibinfo {month}
  {Aug.}\ \bibinfo {year} {2009})%
  \bibAnnoteFile{NoStop}{muller09}%
\bibitem{FG99}%
  \BibitemOpen
  \bibfield{author}{%
  \bibinfo {author} {\bibfnamefont{C.~A.}\ \bibnamefont{Fuchs}}\ and\ \bibinfo
  {author} {\bibfnamefont{J.}~\bibnamefont{van~de Graaf}},\ }%
  \bibfield{journal}{%
  \bibinfo {journal} {IEEE Transactions on Information Theory}\ }%
  \textbf{\bibinfo {volume} {45}},\ \bibinfo {pages} {1216} (\bibinfo {year}
  {1999})%
  \bibAnnoteFile{NoStop}{FG99}%
\bibitem{CW79}%
  \BibitemOpen
  \bibfield{author}{%
  \bibinfo {author} {\bibfnamefont{J.~L.}\ \bibnamefont{Carter}}\ and\ \bibinfo
  {author} {\bibfnamefont{M.~N.}\ \bibnamefont{Wegman}},\ }%
  \bibfield{journal}{%
  \bibinfo {journal} {Journal of Computer and System Sciences}\ }%
  \textbf{\bibinfo {volume} {18}},\ \bibinfo {pages} {143} (\bibinfo {year}
  {1979})%
  \bibAnnoteFile{NoStop}{CW79}%
\end{thebibliography}
\end{document}